\definecolor{darkblue}{rgb}{0.1,0,0.5}
\theoremstyle{theorem}
\newtheorem{theorem}{Theorem}[section]
\newtheorem{lemma}[theorem]{Lemma}
\newtheorem{proposition}[theorem]{Proposition}
\newtheorem{corollary}[theorem]{Corollary}
\newtheorem{definition}[theorem]{Definition}
\newtheorem{claim}[theorem]{Claim}
\theoremstyle{remark}
\newtheorem{remark}[theorem]{Remark}
\DeclareMathOperator{\Telic}{Telic}
\DeclareMathOperator{\Id}{Id}
\DeclareMathOperator{\Per}{Per}
\DeclareMathOperator{\Size}{size}
\DeclareMathOperator{\Time}{Time}
\DeclareMathOperator{\One}{\textbf{1}}
\newcommand{\DD}{\mathbb{D}}
\newcommand{\NN}{\mathbb{N}}
\newcommand{\QQ}{\mathbb{Q}}
\newcommand{\RR}{\mathbb{R}}
\newcommand{\ZZ}{\mathbb{Z}}
\newcommand{\ccP}{\textsf{\upshape P}}
\newcommand{\ccNP}{\textsf{\upshape NP}}
\newcommand{\ccA}{\textsf{\upshape A}}
\newcommand{\ccM}{\textsf{\upshape M}}
\newcommand{\ccN}{\textsf{\upshape N}}
\newcommand{\ccK}{\textsf{\upshape K}}
\newcommand{\ccSAT}{\textsf{\upshape SAT}}
\newcommand{\ccFNP}{\textsf{\upshape FNP}}
\newcommand{\ccGRTELIC}{\textsf{\upshape G}\mathbb{R}\textsf{\upshape --TELIC}}
\newcommand{\ccRTELIC}{\mathbb{R}\textsf{\upshape --TELIC}}
\newcommand{\ccRBTELIC}{\mathbb{R}\textsf{\upshape --BTELIC}}
\newcommand{\ccRSTELIC}{\mathbb{R}\textsf{\upshape --STELIC}}
\newcommand{\ccRSBTELIC}{\mathbb{R}\textsf{\upshape --SBTELIC}}
\newcommand{\cA}{\mathcal{A}}
\newcommand{\cB}{\mathcal{B}}
\newcommand{\cG}{\mathcal{G}}
\newcommand{\cI}{\mathcal{I}}
\newcommand{\cQ}{\mathcal{Q}}
\newcommand{\cS}{\mathcal{S}}
\newcommand{\cU}{\mathcal{U}}
\newcommand{\cW}{\mathcal{W}}
\title[Correspondences in computational and dynamical complexity II]{Correspondences in computational and dynamical complexity II: forcing complex reductions}
\author[S. Everett]{Samuel Everett}
\address{University of Chicago}
\email{same@uchicago.edu}
\date{}
\begin{document}

\begin{abstract}
An \emph{algebraic telic problem} is a decision problem in $\textsf{NP}_\RR$ formalizing finite-time reachability questions for one-dimensional dynamical systems.
We prove that the existence of ``natural" mapping reductions between algebraic telic problems coming from distinct dynamical systems implies the two dynamical systems exhibit similar behavior (in a precise sense).
As a consequence, we obtain explicit barriers for algorithms solving algebraic telic problems coming from complex dynamical systems, such as those with positive topological entropy. For example, some telic problems cannot be decided by uniform arithmetic circuit families with only $+$ and $\times$ gates.
\end{abstract}

\maketitle
\tableofcontents
\pagebreak

\setstretch{1.06}

\section{Introduction}

This paper studies the computational complexity of finite-time reachability questions derived from dynamical systems, and asks how dynamical structure constrains the kinds of algorithms that can solve them. Informally, given a dynamical system and a target region of state space, we consider decision and search problems of the form: does there exist a suitably encoded initial state whose orbit reaches the target after a prescribed number of iterates? Our aim is to develop a technique for using concepts from dynamical systems to place \emph{a priori} restrictions on algorithms deciding such reachability problems.
The critical insight is that the definition of a reduction between two reachability problems coming from distinct dynamical systems is nearly identical to the definition of a conjugacy of the two dynamical systems. 

It is natural to derive computational problems from dynamical systems, and there is a sizable literature dedicated to connecting computability and complexity theory with dynamics. Salient examples include (i) determining the computability and complexity of state-reachability and related decision problems for a fixed system \cite{gracca2018computing,braverman2009constructing,braverman2006non,rojas2019computational,gracca2024robust}, and (ii) questions about when a dynamical system can simulate a universal Turing machine \cite{moore1991generalized,cotler2024computational,koiran1999closed,bournez2013computation,cardona2021constructing,cardona2022turing,cardona2024hydrodynamic}. (See the introduction of the first part of this work \cite{everett1} for a detailed overview.)

These developments have produced powerful computability and complexity-theoretic tools for analyzing dynamical systems. In contrast, it is comparatively less explored how \emph{dynamical} notions can be used to inform fundamental complexity-theoretic questions. The existing work in this direction has already led to deep insights (see, e.g.\ \cite{blanc_et_alICALP,bournez2017polynomial,cotler2024computational}), and motivates the program pursued here.

\subsection{Overview of main definitions and results}

For the purpose of this paper, a \emph{dynamical system} is a pair $(X,F)$ consisting of a set $X$ (the \emph{state space}) and a transformation $F:X\to X$. In most settings of interest, $X$ is a compact metric space and $F$ is continuous. To witness dynamical behavior, iterate $F$ on points $x\in X$, writing $F^n(x)$ for the $n$-fold composition of $F$ applied to $x$ (with $n\in\NN$, or $n\in\ZZ$ when $F$ is invertible).

In this paper we consider dynamical systems $(I, F)$ of the unit interval $I=[0,1]$ that are \emph{computable by Blum--Shub--Smale machines}: BSS-computable systems are those for which there exists a BSS-machine that on input $x \in I$, returns $F(x) \in I$. Issues of complexity do not arise in this case because the input $x \in I$ is always of length one.

An \emph{algebraic telic problem}, or simply a \emph{telic problem} coming from a dynamical system $(I, F)$ is a decision problem contained in $\ccNP_\RR$, and defined as follows.
Let $g:I\rightarrow I$ be a BSS-computable homeomorphism of the unit interval (e.g.\ a polynomial whose restriction to $I$ is a bijection).
Let $I_m=\{p/2^m:p\in [2^m]\}$ denote the precision $m$ dyadic rationals contained in $I$.
Then on length $n+2$ input $(a, b, 1,\dots,1)$ for $a, b \in I$ and $a\leq b$, decide whether there is a dyadic rational $y$ of precision $n^2$, i.e.\ $y \in I_{n^2}$, such that $F^n(g(y)) \in [a, b]$.
We denote an algebraic telic problem coming from a dynamical system $(I, F)$ by $\ccRTELIC_{(I, F)}(g)$.

We will also work with a generalization of algebraic telic problems called \emph{bounded algebraic telic problems}, defined as follows.
Let $(I, F)$ be a BSS-computable dynamical system and $g:I\rightarrow I$ a BSS-computable homeomorphism. Let $k:\NN\rightarrow \NN$ be a function computable by a polynomial-time Turing machine over a unary alphabet, and let $a:\NN\rightarrow [0,1]$ denote a function computable by a polynomial-time BSS machine.
Then on length $n+1$ input $(c, 1,\dots,1)$ for $c\in I$, decide whether there is a precision $n^2$ dyadic rational $y \in I_{n^2}$ such that $F^{k(n)}(g(y)) \in [c-a(n), c+a(n)]$. Bounded algebraic telic problems are denoted $\ccRBTELIC_{(I, F)}(a, k, g)$.

We find the most success in studying the search versions of telic problems, in which a dyadic rational $y$ witnessing $F^n(g(y)) \in [a, b]$ is returned. Search versions of standard algebraic telic problems and bounded algebraic telic problems are denoted
\[
\ccRSTELIC_{(I, F)}(g) \text{ and } \ccRSBTELIC_{(I, F)}(a,k,g),\text{ respectively}.
\]

A central idea of this paper is that writing down the form any reduction between telic problems must take comes very close to simply writing down the definition of a conjugacy of dynamical systems.
In particular, if a reduction can take a certain form, it implies the underlying dynamical systems share certain characteristics. 
As a consequence, if two telic problems are derived from systems which do not share those characteristics (are classified differently), reductions of that type cannot go through.

Although focusing on the existence of \emph{reductions} between problems is an unusual approach, it is the right frame of mind to use in our setting. Moreover, studying the structure of reductions between telic problems extends to the structure of algorithms solving the problems directly: considering reductions to telic problems coming from the trivial system $(I, \Id)$ is tantamount to considering algorithms solving the problems directly.

We now make these ideas concrete.

Let $\cI$ label the collection of all points and closed intervals contained in $I$, represented as the convex subset of $\RR^2$ whose points $(c,r)$ we correspond to intervals $[c-r,c+r]\subseteq I$.
Suppose the system $(I, F)$ has complicated orbit structure while $(I, T)$ is dynamically regular, so its search telic problems are solvable in polynomial time.
A natural method for solving $\ccRSTELIC_{(I, F)}(g)$ is to reduce it to some $\ccRSTELIC_{(I, T)}(\hat{g})$ coming from $(I, T)$ by constructing a family of maps $\{\eta_n\}_{n=1}^\infty$, with each $\eta_n:\cI \rightarrow \cI$ satisfying the property that for all $x \in I_{n^2}$ and $J \in \cI$, $F^n(g(x)) \in J$ if and only if $T^n(\hat{g}(x)) \in \eta_n(J)$.
We call such a reduction a \emph{level 3 natural search reduction from $\ccRSTELIC_{(I, F)}(g)$ to $\ccRSTELIC_{(I, T)}(\hat{g})$}.

We define stricter templates (levels 1--2) by imposing additional constraints on the functions $\{\eta_n\}$ composing the reduction,
and a more permissive template (level 4) that can be used to reduce between any two search problems, by only requiring the one-way implication
$T^n(\hat g(x))\in \eta_n(J)\Rightarrow F^n(g(x))\in J$.

Our first theorem shows that search telic problems coming from dynamically mismatched systems cannot be reduced to one another by natural reductions.

\begin{theorem}[\cref{thmNoNaturalSearch}]\label{thmNoNaturalSearch1}
There exists a dynamical system $(I, F)$ with positive topological entropy and a regular dynamical system $(I, T)$ with zero topological entropy, both computable by BSS-machines, as well as homeomorphisms $g:I\rightarrow I$ and $\hat{g}:I\rightarrow I$ computable by BSS-machines, for which $\ccRSTELIC_{(I, T)}(\hat{g}) \in \ccP_\RR$, and $\ccRSTELIC_{(I, F)}(g)$ is not reducible to $\ccRSTELIC_{(I, T)}(\hat{g})$ by level 1, 2, or 3 natural search reductions.
\end{theorem}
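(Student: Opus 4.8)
The plan is to build the two systems explicitly and then show that a natural search reduction between their telic problems would force a dynamical coincidence that cannot hold because of the entropy mismatch. For $(I,F)$ I would take a fixed standard example of a BSS-computable interval map with positive topological entropy — e.g.\ a full tent map or the logistic map at parameter $4$, conjugated into $I$ so that it is a genuine self-map of $[0,1]$; these are manifestly BSS-computable since they are piecewise polynomial. For $(I,T)$ I would take something dynamically as simple as possible while still being nontrivial enough that the theorem is not vacuous — the identity $(I,\Id)$ is the cleanest choice, or a contraction toward a fixed point; in either case $T$ has zero topological entropy, and one checks directly that $\ccRSTELIC_{(I,T)}(\hat g)\in\ccP_\RR$ because $T^n(\hat g(y))$ has a closed-form description, so deciding membership in $[a,b]$ and producing a witnessing dyadic $y$ of precision $n^2$ reduces to a bounded search / polynomial root-location problem that a polynomial-time BSS machine can do. I would pick $\hat g=\Id$, or whatever homeomorphism makes this cleanest.

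Next I would extract the rigidity statement from the definition of a level $3$ natural search reduction. Such a reduction is a family $\{\eta_n\}$ with $\eta_n:\cI\to\cI$ such that for every dyadic $x\in I_{n^2}$ and every interval $J\in\cI$, $F^n(g(x))\in J \iff T^n(\hat g(x))\in\eta_n(J)$. The key step is to observe that this biconditional, ranging over \emph{all} $J\in\cI$, pins down $\eta_n$ on the orbit points: taking $J$ to shrink to the single point $F^n(g(x))$ forces $\eta_n(\{F^n(g(x))\})=\{T^n(\hat g(x))\}$, so the map $F^n(g(x))\mapsto T^n(\hat g(x))$ is well-defined, and reversing the roles of the biconditional shows it is injective on the set $\{F^n(g(x)) : x\in I_{n^2}\}$. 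In other words, for each $n$ the reduction supplies an injection from the "depth-$n$ reachable set" of $F\circ g$ restricted to precision-$n^2$ dyadics into the corresponding reachable set for $T\circ\hat g$, and — crucially — this injection is realized by an actual continuous (indeed monotone, by the interval-to-interval structure of $\eta_n$, under the stricter levels, and at least order-compatible at level $3$) map on $\cI$. Levels $1$ and $2$ only impose additional constraints (uniformity in $n$, or $\eta_n$ independent of $n$, or affine/nice form), so anything ruled out at level $3$ is a fortiori ruled out for levels $1$ and $2$; hence it suffices to rule out level $3$.

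The heart of the argument is then a counting / combinatorial-complexity obstruction. Because $F$ has positive topological entropy $h>0$, the number of distinct itineraries of length $n$ — equivalently, a suitable measure of the combinatorial complexity of the partition of $I$ into monotonicity (lap) intervals of $F^n$ — grows like $e^{hn}$, so the set of points of $I_{n^2}$ that are distinguished by where their orbit lies after $n$ steps is exponentially rich in a way that cannot be compressed. For $(I,\Id)$ (or a contraction), by contrast, $T^n(\hat g(x))=\hat g(x)$ (or lies within a ball of radius shrinking exponentially), so the depth-$n$ reachable set carries essentially no dynamical information beyond that of $\hat g$ alone, and an order-preserving map on $\cI$ cannot inject the $F$-side structure into it for large $n$: I would make this precise by showing that the $\eta_n$-image of a nested family of intervals capturing two points with the same $\hat g$-value but different $F^n g$-values would have to both contain and separate the same point $T^n(\hat g(x))$, a contradiction. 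Concretely, pick $x_1\neq x_2$ in $I_{n^2}$ with $\hat g(x_1)$ and $\hat g(x_2)$ so close that $T^n(\hat g(x_1))$ and $T^n(\hat g(x_2))$ are forced into a common tiny interval $J_0$, while $F^n(g(x_1))$ and $F^n(g(x_2))$ are far apart — possible for large $n$ precisely by positive entropy / sensitive dependence of $F$ versus the equicontinuity of $T^n$; then the biconditional applied to $J_0$ and to intervals separating $F^n(g(x_1))$ from $F^n(g(x_2))$ yields incompatible requirements on $\eta_n$.

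I expect the main obstacle to be the last step: turning "positive entropy" into the exact quantitative separation statement that contradicts the existence of the $\eta_n$'s, while being careful that the dyadic precision $n^2$ is fine enough to actually see the exponential-in-$n$ proliferation of itineraries (it is, since $2^{n^2}\gg e^{hn}$, but this needs to be tracked). A secondary technical point is verifying that the level-$3$ template's requirement that $\eta_n$ map $\cI$ into $\cI$ — i.e.\ send points to points and intervals to intervals — really does force the order-compatibility I am using; if level $3$ is too permissive for that, I would instead argue directly from the biconditional on point-intervals and on a strategically chosen finite family of separating intervals, which does not need monotonicity of $\eta_n$ at all, only that $\eta_n$ is a well-defined function on $\cI$.
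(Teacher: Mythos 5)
Your overall architecture (chaotic $F$ versus a trivial/contractive $T$, reduce everything to ruling out level 3, exploit that $\eta_n$ must send elements of $\cI$ to elements of $\cI$ and test the biconditional on degenerate intervals) matches the spirit of the paper, but the step where you actually derive the contradiction at level 3 does not work as described. First, at level 3 the biconditional only quantifies over $x\in I_{n^2}$, so taking $J$ to be the point $F^n(g(x))$ does \emph{not} force $\eta_n(J)$ to be the point $T^n(\hat g(x))$: it only forces $\eta_n(J)$ to contain $T^n(\hat g(x))$ and to exclude $T^n(\hat g(x'))$ for the finitely many $x'\in I_{n^2}$ with $F^n(g(x'))\neq F^n(g(x))$; $\eta_n(J)$ may well be a nondegenerate interval, so the ``induced map on orbit points is well-defined'' claim is unjustified. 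More seriously, your concrete contradiction --- two dyadics $x_1\neq x_2$ with $T^n(\hat g(x_1)),T^n(\hat g(x_2))$ exponentially close but $F^n(g(x_1)),F^n(g(x_2))$ far apart, then a $J$ separating the $F$-images --- yields no contradiction at all: an interval $\eta_n(J)$ can contain one of two arbitrarily close points and exclude the other simply by placing an endpoint between them (and with your cleanest choice $T=\Id$, $\hat g=\Id$ the target points $T^n(\hat g(x_i))=x_i$ are not even close, so the ``equicontinuity squeezes them together'' premise fails). Sensitivity/entropy applied to a \emph{pair} of points is simply not an obstruction to the existence of the interval-valued $\eta_n$.

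What is actually needed --- and what the paper proves via \cref{lem:telicPerturb} --- is an interleaving/spread-versus-sparseness obstruction involving the whole solution set, not two points: for $J$ of length $2^{-n}$, the set $S_J=\{x\in I_{n^2}: F^n(g(x))\in J\}$ meets every branch interval $[\tfrac{p}{2^n},\tfrac{p+1}{2^n})$ of $F^n$ (the paper uses $F=E_2$, $g=\alpha$, $T=\Id$, $\hat g=\Id$ to control this), so any interval whose $I_{n^2}$-trace contains $T^n(\hat g(S_J))=S_J$ must have length at least $1/3$; but $S_J$ has only $O(2^{n^2-n})$ elements, so such an interval necessarily contains dyadics that are not solutions, violating the direction ``$T^n(\hat g(x))\in\eta_n(J)\Rightarrow F^n(g(x))\in J$''. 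Your closing sentence about ``a strategically chosen finite family of separating intervals'' gestures at this but never supplies it, and it is the entire content of the level-3 case. Two smaller points: your claim that ruling out level 3 automatically disposes of levels 1 and 2 is only half right under the paper's definitions --- level 2 quantifies over all $x\in I$ and so is indeed subsumed, but level 1 uses point maps $\eta_n:I\to I$ and requires the \emph{same} homeomorphism on both sides, so it needs (an easy) separate argument, which the paper gives via the $2$-to-$1$ versus injective clash between $E_2$ and $x\mapsto x^2$; and note the paper itself uses different witnesses $(T,g,\hat g)$ for levels 1--2 and for level 3, so your single-witness plan would need the same kind of patching.
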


The guiding principle is that reductions between telic problems cannot ignore dynamical distinctions. If two systems are dynamically ``too far apart," then no \emph{simple} reduction can carry the computational burden from one to the other.

Our second theorem shows that even level 4 reductions are forced to be highly irregular when the two dynamical systems in question exhibit distinct behavior.
It is stated for bounded algebraic search telic problems. Let $\rho$ be the Euclidean metric on $\RR^2$, and let $\cI_{a}\subset\RR^2$ denote the set of points $(c, a)$ in the plane corresponding to intervals $[c-a,c+a]$ of fixed radius $a$.

\begin{theorem}[\cref{thm:fixedPoint}]\label{thm:fixedPoint1}
There exists a BSS-computable dynamical system $(I, F)$, and functions $a(n), k(n)$ for which $\ccRSBTELIC_{(I, F)}(a, k,\Id)$ is not reducible to the trivial search problem  $\ccRSBTELIC_{(I, \Id)}(a, \Id,\Id)$ coming from $(I, \Id)$ by level 4 natural search reductions whose functions $\eta_n:\cI_{a(n)}\rightarrow \cI_{a(n)}$ composing the sequence $\{\eta_n\}_{n=1}^\infty$ satisfy any of the following properties:
\begin{enumerate}
\item continuity,
\item non-expansivity,
\item there is a lower semi-continuous function $\varphi:\RR^2\rightarrow [0,\infty)$ such that
\[
\rho(z, \eta_n(z))\leq \varphi(z)-\varphi(\eta_n(z))
\]
for all $z \in \cI_{a(n)}$.
\end{enumerate}
\end{theorem}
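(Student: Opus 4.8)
The plan is to route all three cases through a single observation: in the bounded telic setting each of (1), (2), (3) forces \emph{every} map $\eta_n$ of the putative reduction to possess a fixed point in $\cI_{a(n)}$, while the defining one-way implication of a level~4 natural search reduction turns a fixed point of $\eta_n$ into the assertion that $F$ has a precision-$n^2$ dyadic point whose $k(n)$-th iterate lands within $2a(n)$ of it. Hence it suffices to produce a BSS-computable $(I,F)$ together with $a$ and $k$ whose orbits never almost-return on the relevant dyadic grids, and (since a reduction must supply $\eta_n$ for all $n$) to derive the contradiction at a single $n$.

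For the construction I take $F$ to be the rigid half-rotation of $I$: $F(x)=x+\tfrac12$ for $x\in[0,\tfrac12)$ and $F(x)=x-\tfrac12$ for $x\in[\tfrac12,1]$. A BSS machine computes this in constant time, and crucially $|x-F(x)|=\tfrac12$ for \emph{every} $x\in I$. Put $k\equiv 1$ and $a\equiv\tfrac18$, both trivially polynomial-time in the required senses, so that $\ccRSBTELIC_{(I,F)}(a,k,\Id)$ and $\ccRSBTELIC_{(I,\Id)}(a,\Id,\Id)$ are legitimate bounded algebraic search telic problems, and identify $\cI_{a(n)}$ isometrically with the compact interval $[\tfrac18,\tfrac78]$. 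Suppose, for contradiction, that $\{\eta_n\}_{n=1}^\infty$ is a level~4 natural search reduction of the former to the latter, and fix $n=2$. Since $\Id^{n}(\Id(x))=x$ for all $x$, the level~4 implication here reads: for all $x\in I_{n^2}$ and all $J\in\cI_{a(n)}$, if $x\in\eta_n(J)$ then $F(x)=F^{k(n)}(x)\in J$. If $J^\ast=[c^\ast-\tfrac18,c^\ast+\tfrac18]\subseteq I$ is a fixed point of $\eta_n$, then because $J^\ast$ has length $\tfrac14\ge 2^{-4}=2^{-n^2}$, is contained in $I$, and consecutive elements of $I_{n^2}$ differ by $2^{-n^2}$, there is some $x\in I_{n^2}\cap J^\ast$; applying the implication with $J=J^\ast$ gives $F(x)\in J^\ast$. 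Then $x$ and $F(x)$ both lie in $J^\ast$, whence $|x-F(x)|\le|J^\ast|=\tfrac14$, contradicting $|x-F(x)|=\tfrac12$.

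It remains to verify that each of (1), (2), (3) forces every $\eta_n$ (in particular $\eta_2$) to have a fixed point. Under (1), $\eta_n$ is a continuous self-map of the compact interval $[\tfrac18,\tfrac78]$, so $c\mapsto\eta_n(c)-c$ changes sign and $\eta_n$ has a fixed point by the intermediate value theorem; (2) is subsumed, non-expansive maps being continuous. Under (3), $\cI_{a(n)}$ is a closed subset of $\RR^2$, hence a complete metric space; $\varphi$ restricted to it is lower semicontinuous and nonnegative; and the inequality $\rho(z,\eta_n(z))\le\varphi(z)-\varphi(\eta_n(z))$ is exactly the hypothesis of Caristi's fixed point theorem, which yields a fixed point of $\eta_n$. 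In every case the preceding paragraph produces a contradiction, so no level~4 natural search reduction of $\ccRSBTELIC_{(I,F)}(a,k,\Id)$ to $\ccRSBTELIC_{(I,\Id)}(a,\Id,\Id)$ whose maps satisfy (1), (2), or (3) exists.

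The genuine content is the first half of the last paragraph: noticing that three unrelated-looking regularity hypotheses all collapse to ``$\eta_n$ has a fixed point,'' where the condition~(3) instance is the one that actually invokes a named theorem (Caristi) rather than one-variable analysis. Once this is seen the construction is essentially forced, and the only design freedom left, the choice of $F$, is exercised as lazily as possible: a rigid rotation displaces every point of $I$ by a fixed positive amount dwarfing both the interval diameter $2a(n)$ and the grid spacing $2^{-n^2}$, so the fixed-point obstruction bites for all $n\ge2$. A Diophantine irrational rotation with $k(n)$ growing and $a(n)=\Theta(1/(n\,k(n)))$ shrinking works the same way; a \emph{continuous} $F$ cannot serve, since it must have a fixed point and thus put some precision-$n^2$ dyadic within $O(2^{-n^2})$ of that fixed point, destroying the required displacement lower bound.
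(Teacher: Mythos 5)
Correct, and essentially the same approach as the paper: each of the three regularity hypotheses forces a fixed interval $J=\eta_n(J)$ (via IVT/Brouwer, continuity of non-expansive maps, and Caristi, respectively), and the system and parameters are engineered so that no precision-$n^2$ dyadic point of such a $J$ can return to $J$ under $F^{k(n)}$, contradicting the level 4 implication $x\in\eta_n(J)\Rightarrow F^{k(n)}(x)\in J$. Your instantiation (rotation by $1/2$, constant $k\equiv 1$, constant $a\equiv 1/8$, contradiction already at $n=2$) is a simpler parameter choice than the paper's irrational rotation with adaptive $k(n)\in\{n,n+1\}$ and $a(n)=2^{-n^2}$, but the mechanism is identical and your version is valid.
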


The continuity restriction alone in \cref{thm:fixedPoint1} rules out level 4 reductions realized by uniform arithmetic-circuit families with only $+,\times$ gates.

Notice we do not place emphasis on studying the \emph{complexity} of computing the functions $\{\eta_n\}_{n=1}^\infty$ composing the reductions directly. Rather, we aim to determine what mathematical structures they must take; if enough complicated structure is forced upon the $\{\eta_n\}$, this will imply complexity lower bounds.
Additionally, the decision problems introduced in this paper are contained in $\ccNP_\RR$.
Hence, to place our results into a deeper perspective, telic problems add texture to the notion of \emph{complexity} in dynamical systems: separations in the complexity of solving telic problems associated with distinct dynamical systems imply the systems are distinguished at a deep, logical level extending beyond any standard topological or analytic differences they may have.

\subsection{Organization}
This paper is the second in a collection \cite{everett1,everett3} that seeks to develop a bidirectional relationship between dynamical systems and computational complexity theory. The papers use a common class of decision problems as the connective tissue between the disciplines. Moreover, the papers are each self-contained and may be read in any order.

In \cref{secPreliminaries} of this paper, we give necessary background in computational complexity and dynamical systems. \cref{secRealTelic} introduces algebraic telic problems, and gives initial classification and structure results. Following the definitions, in  \cref{secRealTelicReductions} and \cref{secSearchRed} we prove the main theorems of this paper.

\subsection*{Acknowledgements}
I would like to thank David Cash for the feedback and support throughout the rather long development of this project.
The author is partially supported by the National Science Foundation Graduate Research Fellowship Program under Grant No. 2140001.

\section{Preliminaries}\label{secPreliminaries}

\subsection{Background on dynamical systems}

We only state a few basic notions in this subsection, with less commonly used ideas presented as needed. If the reader wishes to obtain a more comprehensive background in dynamics, we suggest \cite{robinson1998dynamical,akin1993general}.

In the broadest sense, a \emph{dynamical system} is any group or semigroup action $\phi:G \times X \rightarrow X$. The actions can be discrete, continuous or smooth. Accordingly, there is a deep theory for general groups, both amenable and non-amenable, discrete, and topological. However, the general theory is based in the initial study of the well understood $\ZZ$ and $\RR$-actions and their associated semigroup actions, which are the focus of this paper.
In particular, whenever we say the pair $(X, T)$ is a \emph{dynamical system} or simply a \emph{system}, we mean $X$ is a compact metric space and $T$ is a continuous transformation. In this paper we will take $X = I= [0,1]$.

Two continuous transformations $T:X \rightarrow X$ and $S:Y \rightarrow Y$ are  \emph{topologically conjugate}, or just \emph{conjugate}, if there exists a homeomorphism $h:X \rightarrow Y$ such that $h\circ T = S \circ h$. We say $h$ is a \emph{topological semi-conjugacy} or \emph{morphism} from $T$ to $S$ provided $h$ is continuous, onto, and $h\circ T = S \circ h$.
If two systems are conjugate, they are considered essentially identical because most important measures of complexity (such as the entropy) are invariant under conjugacy.

\subsection{Background on real complexity theory and the Blum--Shub--Smale model}\label{secBSS}
For a proper source covering the basics of computational complexity theory see \cite{arora2009computational,papadimitriou}.
We begin by defining a Blum--Shub--Smale (BSS) machine \cite{blum1989theory}. We refer the reader to \cite{blum2012complexity} for a comprehensive introduction to BSS-machines, and the corresponding theory of real computation, while \cite{meer1997survey} gives a nice survey of real structural complexity theory.

Essentially, a Blum--Shub--Smale (BSS) machine is a Random Access Machine capable of performing the basic arithmetic operations at unit cost, with registers holding arbitrary real numbers. In more detail, a BSS-machine is a generalization of the standard Turing machine model with bit operations, and is instead capable of computing over arbitrary rings and fields. BSS-machines have a finite number of internal states, each of which corresponds to one of three types: (1) shift state: moving the head left or right, (2) branch state: changing states based on content of current cell, and (3) computation state: this state has a hard-wired function $f$ that reads the content $a$ of the current cell and replaces it with $f(a)$. When computing over $\RR$, $f$ is  a rational function of the form $g/h$ where $g$ and $h$ are polynomials.

For completeness, we give the formal definition of Blum--Shub--Smale machines, although we do not work closely with the definition. We use the clear definition of BSS-machines over $\RR$ given by Meer and Michaux in their lucid paper \cite{meer1997survey}.

\begin{definition}
Let
\[
Y \subset \RR^\infty\coloneqq \bigcup_{d \in \NN}\RR^d.
\]
A \emph{BSS-machine $\ccM$ over $\RR$ with admissible input set $Y$} is defined by a finite set $J$ of instructions labeled by $0,\dots,N$, and a finite set of \emph{machine constants} denoted $\alpha_q$.
A \emph{configuration} of $\ccM$ is a quadruple
\[
(n, i, j, x) \in J \times \NN \times \NN \times \RR^\infty,
\]
where, $n$ denotes the currently executed instruction, $i$ and $j$ are addresses, and $x$ is the content of the registers of $\ccM$. The initial configuration of $\ccM$'s computation on input $y \in Y$ is $(1, 1, 1, y)$. If $n = N$ and the actual configuration is $(N, i, j, x)$, the computation stops with output $x$. The instructions $\ccM$ can perform are as follows:
\begin{enumerate}
\item \textit{Computation:} $n:x_s \leftarrow x_k \circ_n x_l$ where $\circ_n$ is an operation such as $+,-,\times$, or $n :x_s \leftarrow \alpha_q$ for machine constant $\alpha_q \in \RR$. The register $x_s$ is assigned the value $x_k \circ_n x_l$ or $\alpha_q$ respectively. The other registers remain unchanged. The next instruction will be $n+1$, and the copy registers are changed according to $i \leftarrow i+1$ or $i \leftarrow 1$ and similarly for $j$.
\item \textit{Branching:} \emph{$n$: if $x_0 \geq 0$ goto $\beta(n)$ else goto $n+1$}, where $\beta(n) \in J$. The answer of the check determines the next instruction is determined.
\item \textit{Copying:} \emph{$n:x_i \leftarrow x_j$}. The ``read"-register is copied into the ``write"-register.
\end{enumerate}
\end{definition}

To any BSS-machine $\ccM$ over $Y$ corresponds the partial function $\Phi_\ccM:Y\rightarrow \RR^\infty$ computed by $\ccM$. This is called the \emph{input-output map}.
In order to deal with complexity theoretic issues for BSS-machines, we must define appropriate cost measures for computation.

\begin{definition}
Let $x \in \RR^\infty$ such that $x = (x_1,\dots,x_k,0,0,\dots)$. Then the \emph{size} of $x$ is defined as $\Size(x) \coloneqq k$. Then the size of the vector of $n$ ones is $\Size(\One_n) = n$.

Let $\ccM$ be a BSS-machine over $Y\subset \RR^\infty$, and $y \in Y$. The \emph{running time} of $\ccM$ on $y$ is defined by
\[
\Time_\ccM(y)\coloneqq \begin{cases}
\text{Number of operations executed by $\ccM$ if $\Phi_\ccM(y)$ is defined},\\
\infty \text{ otherwise}.
\end{cases}
\]
If $f:\NN\rightarrow \NN$, we say $\ccM$ computes $\Phi_\ccM$ in \emph{$f$-time} or \emph{in time $f$} if its computation on every input $x$, $\Time_\ccM(x) \leq f(\Size(x))$.
\end{definition}

We shall say  a machine $\ccM$ over $\RR$ is a \emph{polynomial time} machine on $Y \subset \RR^\infty$ if there are positive integers $c$ and $q$ such that $\Time_\ccM(y) \leq c(\Size(y))^q$ for all $y \in Y$. Then a map $\varphi:X\rightarrow Y \subset \RR^\infty$ is said to be a \emph{$p$-morphism} over $\RR$, or \emph{polynomial time computable}, if $\varphi$ is computable by a polynomial time machine on $X$.

Notice the definitions of size and run-time are independent of the magnitude of the real numbers contained in the ``tape cells" of the machine.
Moreover, the cost of any basic operation is 1, no matter what the operands are.
In fact this can lead to a number of peculiar results, breaking from the intuition of the Turing model; this has motivated the introduction of alternate measures of complexity (see the foregoing references).

The \emph{halting set} of $\ccM$ is the set of all inputs $y$ for which $\Phi_\ccM(y)$ converges (is defined).
A \emph{decision problem} over $\RR$ is a set $S \subseteq \RR^\infty$; its decidability/complexity is measured by the computability/complexity of its characteristic function $\chi_S$.
In keeping with tradition, we are most interested in \emph{structured} decision problems.
Let $A \subset B \subset \RR^\infty$.
A pair $(B, A)$ is called a \emph{structured decision problem}, and is \emph{decidable} if and only if there exists a BSS-machine $\ccM$ with admissible input set $B$, such that $\Phi_\ccM$ is the characteristic function of $A$ in $B$. Then $\ccM$ is said to \emph{decide} the decision problem $(B, A)$.

Notice if we suppose (as we do unless stated otherwise) that the set of problem instances $B$ is decidable over $\RR$, then $A$ is computable if and only if $\chi_{A}$ is. Thus, we freely use \emph{decision problem} to denote either a structured or an unstructured problem.

\begin{definition}
A decision problem $(B, A)$ with $A \subset B \subset \RR^\infty$ belongs to the class $\ccP_\RR$ (\emph{deterministic polynomial time}) if and only if there exists a BSS-machine $\ccM$ with admissible input set $B$ and constants $k \in \NN$, $c \in \RR$ such that $\ccM$ decides $(B, A)$ and for all $y \in B$, $\Time_\ccM(y) \leq c\cdot \Size(y)^k$.
\end{definition}

\begin{definition}
A decision problem $(B, A)$ with $A \subset B \subset \RR^\infty$ belongs to the class $\ccNP_\RR$ (\emph{nondeterministic polynomial time}) if and only if there exists a BSS-machine $\ccM$ with admissible input set $B \times \RR^\infty$ and constants
$k \in \NN$,
$c \in \RR$
such that the following conditions hold:
$\Phi_\ccM(y, z) \in \{0,1\}$, $\Phi_\ccM(y, z) =1 \implies y \in A$, and
\[
\forall y \in A, \exists z \in \RR^\infty, \text{ such that } \Phi_\ccM(y, z) = 1 \text{ and } \Time_\ccM(y, z) \leq c \cdot \Size(y)^k.
\]
\end{definition}

The nondeterminism in the definition of $\ccNP_\RR$ refers to the vector $z$---it is the certificate or ``guess" that proves $y \in A$ in polynomial time (there is no indication of how to obtain the certificate).
Similarly to the classical setting, $\ccP_\RR \subseteq \ccNP_\RR$, and it is a central question whether $\ccP_\RR = \ccNP_\RR$. In addition, all problems in $\ccNP_\RR$ are solvable in single exponential time \cite{blum1989theory}.

We conclude this section be introducing the crucial concept of a reduction between decision problems.

\begin{definition}
Let $(B_1, A_1)$ and $(B_2, A_2)$ be decision problems.
We shall say $(B_2, A_2)$ is \emph{reducible in polynomial time} to $(B_1, A_1)$ if and only if there exists a BSS-machine $\ccM$ over $B_2$ such that $\Phi_\ccM(B_2) \subset B_1$, $\Phi_\ccM(y) \in A_1 \iff y \in A_2$, and $\ccM$ halts in polynomial time. This is denoted as $(B_2, A_2) \leq_\RR (B_1,A_1)$. A decision problem $(B, A)$ in $\ccNP_\RR$ is \emph{$\ccNP_\RR$-complete} if and only if every other problem in $\ccNP_\RR$ reduces to $(B, A)$ in polynomial time.
\end{definition}

Although search problems in the BSS model do not have the same meaning as in the Turing model on account of computability issues, we successfully work with the concept, giving a definition in the body of the paper.

\section{Algebraic telic problems}\label{secRealTelic}

Let $\DD$ denote the set of \emph{dyadic rational numbers}; the numbers $y \in \DD$ have form $y = p/2^r$ for some integers $p$ and $r \geq 0$.
Let $\DD_r = \{m \cdot 2^{-r} : m \in \ZZ\}$ denote the collection of dyadic rationals with precision $r$.
If $J \subset \RR$, let $J_r = J \cap \DD_r$. We call $J_r$ the \emph{$r$-discretization of the set $J$}.

Let $I = [0,1]$, and let $F:I\rightarrow I$ be a transformation \emph{computable by a BSS-machine} $\ccM$, so $\Phi_\ccM(x) = F(x)$ for all $x \in [0,1]$.
Since every computation step of a BSS-machine computes a rational function $f/g$ of the operands, where $f$ and $g$ are hard-wired polynomials, and branching is allowed, it is clear that standard dynamical systems of the interval, such as quadratic maps, piecewise continuous dynamical systems such as the tent map, as well as other systems like expanding maps and rigid circle rotations (implemented as $2x \mod 1$ or $x+\alpha \mod 1$) can all be computed by BSS-machines.
We say such systems $(I, F)$ are \emph{computable by BSS-machines}.\footnote{The input is always a single element of $\RR$, so all inputs are the same size. This avoids the need to use terms like ``efficient" when describing such machines: there is no scaling in the complexity of computing single iterations of the maps.}

More generally, we will say a function $f:X\rightarrow Y$, $X \subseteq \RR^n$, $Y \subseteq \RR^m$, is \emph{computable by a BSS-machine} if there exists a BSS-machine $\ccM$ such that $\Phi_\ccM(x) = f(x)$ for all $x \in X$.
Here $I_r = \DD_r \cap [0,1]$ denotes the $r$-discretization of the unit interval.

We now define \emph{algebraic telic problems}. Recalling the intuition from the introduction, a telic problem is essentially a reachability problem asking whether some collection of points intersects with a target subset in state space after the dynamical system has evolved for $n$ time-steps.

\begin{definition}[Algebraic telic problems]\label{defRtelic}
Let $(I, F)$ be a dynamical system computable by a BSS-machine $\ccM$. Let $\cB\subset \RR^\infty$ be the set $\cB = \cup_{n=1}^\infty \cB^{(n)}$ where 
\[
\cB^{(n)} = [0,1]\times [0,1] \times \prod_{j=1}^n \{1\},
\]
each of which is interpreted as a subset of $\RR^{n+2}$; if $y \in \cB^{(n)}$ then $\Size(y) = n+2$.
Let $g:I \rightarrow I$ be a homeomorphism of $I$ computable by a BSS-machine, and define
\[
\cA^{(n)} = \left\{(a, b, 1,\dots,1) \in \cB^{(n)} : a\leq b, \text{ and } \exists s \in I_{n^2} \text{ s.t. } F^n(g(s)) \in [a, b]\right\}.\footnote{The choice of $n^2$ for $I_{n^2}$ is arbitrary. Rather than $n^2$ we can take any efficiently computable strict monotone increasing function $l(n)\geq n$ most generally.}
\]
Let $\cA = \cup_{n=1}^\infty \cA^{(n)}$.
An \emph{algebraic telic problem} coming from a BSS-computable one-dimensional dynamical system $(I, F)$, is the structured decision problem $\ccRTELIC_{(I, F)}(g) \coloneqq (\cB, \cA)$.\footnote{The word \emph{telic} comes from the Greek word ``telos" for end/goal, or ``reaching a goal," which aptly characterizes the problems considered in this paper as a class of reachability problem.}
\end{definition}

Let $\cG$ be the class of all homeomorphisms of the unit interval computable by BSS-machines. Put
\[
\Telic_{(I, F)} = \bigcup_{g \in \cG}\ccRTELIC_{(I, F)}(g).
\]

\begin{lemma}\label{lemRealTelicNP}
Let $(I, F)$ be a BSS-computable system. Then $\Telic_{(I, F)} \subset \ccNP_\RR$.
\end{lemma}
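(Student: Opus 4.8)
The plan is to exhibit, for each $g\in\cG$, a nondeterministic BSS machine $\ccM^*$ with admissible input set $\cB\times\RR^\infty$ witnessing that $\ccRTELIC_{(I,F)}(g)=(\cB,\cA)\in\ccNP_\RR$. Write a candidate instance as $y=(a,b,1,\dots,1)\in\cB^{(n)}$, so that $\Size(y)=n+2$ and $n$ is recoverable by scanning $y$. The obvious certificate is a witness $s\in I_{n^2}$ for the existential quantifier defining $\cA^{(n)}$, and $\ccM^*$ on a pair $(y,z)$ should: (i) scan $y$ to recover $n$; (ii) output $0$ unless $a\le b$; (iii) decode from $z$ a candidate $s$ and output $0$ unless $s\in I_{n^2}$; (iv) compute $r:=F^n(g(s))$ by running the BSS machine for $g$ once and then the BSS machine for $F$ exactly $n$ times, using that the class of BSS machines is closed under composition; (v) output $1$ if $r\in[a,b]$ and $0$ otherwise. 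Soundness is then immediate — an accepting run has exhibited an $s\in I_{n^2}$ with $F^n(g(s))\in[a,b]$ and $a\le b$, i.e.\ $y\in\cA$ — and for completeness, if $y\in\cA^{(n)}$ then a suitable $s$ exists and, suitably encoded, is an accepting certificate.

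The one delicate point is how $s$ is presented to $\ccM^*$ in step (iii). One should not hand over the real number $s$ itself and ask the machine to test $s\in I_{n^2}$: the BSS model over $\RR$ has no rounding primitive, and deciding whether $2^{n^2}s$ is an integer cannot be done within a number of operations polynomial in $\Size(y)$ (the naive decidable procedures run in time exponential in $n^2$). Instead, take the certificate $z$ to be the length-$(n^2{+}1)$ bit string $(\sigma_1,\dots,\sigma_{n^2+1})$ of the integer $p$ for which $s=p/2^{n^2}$; the verifier checks each $\sigma_i\in\{0,1\}$, forms $s=\bigl(\sum_{i=1}^{n^2+1}\sigma_i 2^{i-1}\bigr)2^{-n^2}$ by Horner's rule, and rejects unless $0\le s\le 1$. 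Anything surviving this test lies in $I_{n^2}$ by construction, and every element of $I_{n^2}$ is produced by some such string, so completeness is unaffected. (Note also that, because BSS arithmetic over $\RR$ is exact, $s$, $g(s)$, and the iterates $F^j(g(s))$ are all computed with no round-off.)

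For the running-time bound: $\Size(z)$ need only be $n^2+1=O(\Size(y)^2)$; scanning $y$ and the relevant entries of $z$, checking the bits, and forming $s$ and $2^{-n^2}$ all cost $O(\Size(y)^2)$ operations; evaluating $g$ at a single point of $I$ and performing each of the $n$ iterations of $F$ on points of $I$ cost a bounded number of operations apiece, since the inputs to those machines are always single real numbers and, by the standing convention recorded for BSS-computable systems, no complexity scaling occurs in that situation; the final comparisons are $O(1)$. Hence $\ccM^*$ halts within $O(\Size(y)^2)$ steps, the polynomial bound demanded by the definition of $\ccNP_\RR$. As $g\in\cG$ was arbitrary and the instance set $\cB$ is itself decidable over $\RR$, every problem $\ccRTELIC_{(I,F)}(g)$, and hence every member of $\Telic_{(I,F)}$, lies in $\ccNP_\RR$.

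I expect the main (and fairly mild) obstacle to be exactly this encoding issue: the certificate must be a genuine bit string so that membership in the discrete set $I_{n^2}$ is verifiable at all within the polynomial budget, rather than a bare real number that the verifier is powerless to round. The remaining ingredients are bookkeeping, using that $g$ and $F$ map $I$ into $I$ — so that every iterate $F^j(g(s))$ remains inside the admissible input set of the machine computing $F$ — together with the convention that a single application of a BSS-computable interval map is a bounded-cost operation.
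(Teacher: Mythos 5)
Your verifier is correct and follows the same guess-and-verify template as the paper: recover $n$ from the instance, verify the certificate encodes an element of $I_{n^2}$, compute $F^n(g(s))$ in $O(n)$ unit-cost iterations, and compare against $[a,b]$; this meets the paper's definition of $\ccNP_\RR$, and your bit-string certificate (decoded by Horner's rule in $O(n^2)$ operations) is a legitimate choice since the certificate in $\ccNP_\RR$ is an arbitrary vector in $\RR^\infty$. However, the claim you present as the ``delicate point'' is wrong: handing the verifier the real number $s$ itself and testing $s\in I_{n^2}$ \emph{is} possible within the polynomial budget, and this is exactly what the paper does. The paper's machine sets $z\leftarrow s$, rejects if $z<0$ or $z>1$, and then repeats $n^2$ times the pair of unit-cost steps $z\leftarrow 2z$, followed by $z\leftarrow z-1$ whenever $z\ge 1$; at the end $z$ equals the fractional part of $2^{n^2}s$, so accepting iff $z=0$ decides membership in $I_{n^2}$ in $O(n^2)$ BSS operations. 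Your ``exponential'' estimate appears to come from imagining that one must compute $2^{n^2}s$ and then strip off an integer part of size up to $2^{n^2}$; the mod-$1$ reduction interleaved with the doublings removes that issue entirely, and there is no bit-length blow-up to worry about in the BSS cost model since every arithmetic operation and comparison is unit cost. So your proof stands as written, but the justification for preferring the bit-string encoding should be dropped or corrected: it is a convenience, not a necessity.
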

\begin{proof}
We construct a verifier BSS-machine $\ccN$ for a fixed algebraic telic problem $\ccRTELIC_{(I, F)}(g)$, $g \in \cG$.

Let $\ccN$ be a BSS-machine with admissible input set $\cB \times \RR^\infty$ that operates as follows. On input $(x, y) \in \cB \times \RR^\infty$, begin by checking if $y \in I_{n^2}$ where $n = \Size(x)-2$; this can be checked in $O(n^2)$ time by the following standard procedure: first check whether $y < 0$ or $y > 1$ and halt if so. Otherwise continue, setting $z \leftarrow y$ and loop $n$ times performing the following operations: $z\leftarrow 2z$, then if $z \ge 1$ put $z \leftarrow z-1$. After the loop has terminated, return $1$ if $z=0$ and return 0 (``No") otherwise.

If $y$ is not in the proper form, reject. Otherwise, continue by computing $F^n(g(y))$, and checking if $a \leq F^n(g(y)) \leq b$---clearly these are polynomial-time operations in $n$ since computing $g(y)$ does not scale with $n$, and computing $F(x)$ does not scale with $n$, so computing $F^n$ involves computing $n$ iterations of $F$. If ``yes" and $F^n(g(y)) \in [a, b]$, accept, and if not reject and halt, an $O(n)$ operation. Thus $\ccRTELIC_{(I, F)}$ belongs to the class $\ccNP_\RR$.
\end{proof}

There are many systems $(I, F)$ for which $\Telic_{(I, F)} \subset \ccP_\RR$. These are the ``regular" systems one would expect to admit easy telic problems.

\begin{proposition}\label{propRotationInP}
Let $R_\alpha(x) = x+\alpha \mod 1$, $\alpha \in (0,1)$ be a rigid rotation of the circle.
Then $\ccRTELIC_{(S^1, R_\alpha)}(g) \in \ccP_\RR$ for all $g \in \cG$.
\end{proposition}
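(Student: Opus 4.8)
The plan is to exhibit, for each $g \in \cG$, a polynomial-time BSS-machine deciding $\ccRTELIC_{(S^1, R_\alpha)}(g)$. The key structural observation is that $R_\alpha^n$ is itself a rigid rotation, namely $R_\alpha^n(x) = x + n\alpha \bmod 1$, so $R_\alpha^n$ is computable \emph{in time independent of $n$} once $n\alpha \bmod 1$ is known --- and $n\alpha \bmod 1$ can be computed from $n$ (given in unary, i.e.\ as $\One_n$) in $O(n)$ BSS-steps by repeated addition and reduction mod $1$, or in $O(\log n)$ steps by repeated doubling. Thus on input $(a,b,\One_n)$ the machine first computes $\beta \coloneqq n\alpha \bmod 1$, and the telic question becomes: does there exist $s \in I_{n^2}$ with $R_\beta(g(s)) = g(s) + \beta \bmod 1 \in [a,b]$, i.e.\ with $g(s) \in [a-\beta, b-\beta] \bmod 1$? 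Since $g$ is a homeomorphism of $I$, this is equivalent to $s$ lying in the preimage under $g$ of the set $([a-\beta,b-\beta] \bmod 1)\cap I$, which is a union of at most two subintervals of $I$ whose endpoints are $g^{-1}$ of explicitly computable points. (Here $g^{-1}$ is again BSS-computable with $n$-independent cost; one should note $g$ being a BSS-computable homeomorphism of $I$ forces it to be monotone, so $g^{-1}$ is unambiguous and one can locate the relevant preimages.)

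The remaining issue is the discretization: we must decide whether the union of these (at most two) subintervals $[u,v] \subseteq I$ contains a dyadic rational of precision $n^2$. The crucial point is that this is a purely arithmetic question with a closed-form answer: $[u,v]$ contains an element of $\DD_{n^2} \cap I$ if and only if $\lfloor 2^{n^2} v \rfloor \ge \lceil 2^{n^2} u \rceil$ (intersected appropriately with $[0,2^{n^2}]$), equivalently iff $2^{n^2}(v-u) \ge 1$ or there is an integer in $[2^{n^2}u, 2^{n^2}v]$. A BSS-machine can compute $2^{n^2}$ by repeated squaring/doubling in $\poly(n)$ steps, form $2^{n^2}u$ and $2^{n^2}v$, and extract the ceiling and floor by the same bit-extraction loop used in the proof of \cref{lemRealTelicNP} (which runs in $O(n^2)$ time); comparing two reals is a single branch. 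So the whole procedure runs in $\poly(n) = \poly(\Size(y))$ time. Carrying this out for each of the at most two subintervals and taking the disjunction of the answers gives the decision.

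The main obstacle --- and the only place requiring genuine care rather than routine bookkeeping --- is handling the ``wrap-around'' caused by the $\bmod 1$ in $R_\alpha$ together with the fact that $g$ maps into $I = [0,1]$ but $R_\alpha$ is a rotation of $S^1$: one must correctly describe $([a-\beta, b-\beta]\bmod 1) \cap I$ as a finite union of intervals (splitting into cases according to whether the interval $[a-\beta,b-\beta]$, before reduction, straddles $0$ or $1$), and then pull each piece back through $g^{-1}$, again splitting if necessary. This is a finite case analysis producing $O(1)$ intervals with BSS-computable endpoints, after which the dyadic-containment test above applies uniformly. One subtlety worth flagging: if $g$ does not surject onto all of $I$ (the definition only requires $g$ to be a homeomorphism of $I$, hence it is a bijection $I \to I$, so this does not actually arise --- but one should state that $g(I) = I$ explicitly to justify that $g^{-1}$ is defined on the full target), and the endpoints $a,b$ are arbitrary reals in $I$, so all comparisons must be done symbolically as BSS operations rather than assuming any bound on bit-length. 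Once these bookkeeping points are settled, the polynomial time bound is immediate, and since $g \in \cG$ was arbitrary the proposition follows.
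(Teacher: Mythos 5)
There is a genuine gap at the step where you pull the target arc back through $g^{-1}$. You assert that ``$g^{-1}$ is again BSS-computable with $n$-independent cost,'' but membership in $\cG$ only guarantees that $g$ itself is BSS-computable; the inverse of a BSS-computable homeomorphism of $I$ need not be BSS-computable at all. A halting BSS computation outputs, on each branch cell of inputs, a rational function of the input and the machine constants, and the cells are semialgebraic, so some cell contains a subinterval of $I$; hence any BSS-computable function agrees with a rational function on some interval. Taking $g(x)=x^2$ (a perfectly good element of $\cG$, and indeed the homeomorphism used elsewhere in the paper), $g^{-1}(y)=\sqrt{y}$ agrees with no rational function on any interval, so it is not BSS-computable, let alone at unit cost. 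Since the proposition quantifies over all $g\in\cG$, the endpoint-location step of your algorithm is unjustified, and with it the floor/ceiling containment test, whose inputs are exactly those preimage endpoints. Monotonicity of $g$ (which you correctly note) does not rescue exact inversion.

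The rest of the proposal is sound: computing $n\alpha \bmod 1$ in $O(n)$ steps, reducing the question to whether $g(s)$ lands in an arc that is a union of at most two subintervals of $I$, and the dyadic-containment criterion are all fine if the interval endpoints in the $s$-variable were available. The repair is to avoid inverting $g$: use monotonicity of $g$ together with the order behavior of the rotation to locate the relevant dyadics by binary search over $I_{n^2}$, evaluating $s\mapsto R_\alpha^n(g(s))$ forward at each probe and branching on which side of the target it falls; this costs $O(n)$ per evaluation and $O(n^2)$ probes, hence polynomial time. That forward-evaluation binary search is precisely the paper's argument (with full details deferred to the companion paper), so once you replace the $g^{-1}$ step by it, your argument collapses into the paper's proof rather than constituting an alternative route.
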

\begin{proof}
Proof of this assertion is given in full rigorous detail in \cite{everett1}. However, the proof idea is simple and can be succinctly stated as follows. Notice that since $g$ is either increasing or decreasing, it preserves or reverses the order of $I$ with respect to $\leq$: that is, either $a\le b\implies g(a)\le g(b)$, or $a\le b\implies g(b)\le g(a)$. Moreover, the map $R_\alpha$ shares this property for every $\alpha \in [0,1)$. Hence, determining whether there is an $s \in I_{n^2}$ such that $R_\alpha^n(g(s)) \in [a, b]$ reduces to performing a binary search over the space of precision $n^2$ dyadic rationals $I_{n^2}$: pick $s \in I_{n^2}$, compute $R_\alpha^n(g(s))$, and branch based on which side of the target interval $J = [a, b]$ the point $R_\alpha^n(g(s))$ lies on.
\end{proof}

It is often the case in the Boolean model of computation that search and decision problems are closely connected or equivalent in an appropriate sense, i.e.\ through search to decision reductions as found in $\ccSAT$.
However, when working with real numbers at perfect precision search problems cannot be expected to be computable in general---instead one must study approximation algorithms. One example is finding zeros of polynomials. This generally leads to issues of numerical analysis; see \cite{blum2012complexity} for discussion.

On the other hand, the search version of telic problems is clearly computable, since a certificate is just a dyadic rational taken from some finite set of candidates. Moreover, we find it useful to work with search (function) problems when studying complexity theoretic issues of telic problems. For this reason we take a moment to formalize this notion.

A \emph{search problem} over $\RR$ is a polynomially bounded (in the size of inputs) and polynomially decidable (by a BSS-machine) relation  $R\subset \RR^\infty \times \RR^\infty$. As in the classical setting, given an input $x \in \RR^\infty$, the task is to find a $y \in \RR^\infty$ such that $(x, y) \in R$, and return ``No" if no such $y$ exists. Let $\ccFNP_\RR$ denote the class of all polynomially bounded relations $R$ that are BSS-decidable in polynomial time. We let $\ccRSTELIC_{(I, F)}(g) \subset \RR^\infty \times \RR^\infty$ denote the search version of the telic problem $\ccRTELIC_{(I, F)}(g)$.

Reductions between search (function) problems in $\ccFNP_\RR$ are defined identically to the classical case: we say a search problem $\cQ$ reduces to a search problem $\cW$ if there are functions $R$ and $S$, both computable in polynomial time by a BSS-machine, such that for any inputs $x, z \in \RR^\infty$, if $x$ is an instance of $\cQ$, then $R(x)$ is an instance of $\cW$, and if $z$ is a correct output of $R(x)$, then $S(z)$ is a correct output of $x$.

An immediate corollary of \cref{lemRealTelicNP} is
\begin{corollary}
Let $(I, F)$ be an efficiently computable system. Then $\ccRSTELIC_{(I, F)}(g) \in \ccFNP_\RR$ for all $g \in \cG$.
\end{corollary}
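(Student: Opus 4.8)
The plan is to observe that this is essentially a restatement of the verifier construction in the proof of \cref{lemRealTelicNP}, once the definition of $\ccFNP_\RR$ is unwound. Recall $\ccFNP_\RR$ consists of relations $R \subset \RR^\infty \times \RR^\infty$ that are polynomially bounded in the size of the first coordinate and decidable in polynomial time by a BSS-machine. So there are exactly two things to verify: (i) polynomial boundedness of witnesses, and (ii) polynomial-time BSS-decidability of membership in $R$.

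First I would make the relation explicit: write $R = \ccRSTELIC_{(I,F)}(g)$ as the set of pairs $\big((a,b,1,\dots,1),\,y\big)$ with $(a,b,1,\dots,1) \in \cB^{(n)}$, $a \le b$, $y \in I_{n^2}$, and $F^n(g(y)) \in [a,b]$, where $n = \Size(a,b,1,\dots,1) - 2$ is read off the first coordinate. For (i), a witness $y$ is a single real, so $\Size(y) = 1$, which is trivially $\poly(\Size(x))$; hence $R$ is polynomially bounded. For (ii), I would reuse the machine $\ccN$ from the proof of \cref{lemRealTelicNP} nearly verbatim: on input $(x,y)$ it recovers $n = \Size(x)-2$, checks that $x$ has the admissible form $(a,b,1,\dots,1)$ with $a\le b$ and that $y \in I_{n^2}$ (the displayed doubling procedure, $O(n^2)$ steps), then computes $g(y)$ (cost independent of $n$, since $g\in\cG$ acts on a single real) and iterates to obtain $F^n(g(y))$ (cost $O(n)$, again because a single application of $F$ has cost independent of $n$), and finally tests $a \le F^n(g(y)) \le b$ in $O(1)$. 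This acceptance condition is precisely membership in $R$, and the total running time is polynomial in $\Size(x)$, so $R \in \ccFNP_\RR$. Since $g \in \cG$ was arbitrary, the conclusion follows.

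I do not anticipate a genuine obstacle; the corollary is a bookkeeping consequence of \cref{lemRealTelicNP}. The only two points worth a sentence of care are (a) that the unary padding makes $n$ linear in $\Size(x)$, so ``polynomial in $n$'' and ``polynomial in the input size'' coincide, and (b) that in the BSS model $F^n$ is computed in $O(n)$ unit-cost steps because each iterate of $F$ is a fixed BSS computation whose cost does not scale with $n$ (the footnote to \cref{defRtelic} on single-real inputs). Both are already implicit in \cref{lemRealTelicNP} and merely need to be recorded.
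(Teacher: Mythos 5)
Your proposal is correct and follows exactly the route the paper intends: the paper gives no separate proof, presenting the corollary as immediate from \cref{lemRealTelicNP}, and your write-up is precisely that unwinding (witnesses are single reals, so polynomial boundedness is trivial, and the verifier machine $\ccN$ from the lemma decides the relation in polynomial time).
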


\section{Natural reductions between algebraic telic problems}\label{secRealTelicReductions}

In this section we show that natural classes of reductions between decision versions of algebraic telic problems can be entirely ruled out. We interpret these results as rigorous means of saying that reductions between certain telic problems are forced to take a complicated form, since reductions with simple or obvious structure do not exist. This section serves as a prerequisite to \cref{secSearchRed}, in which we obtain stronger results in the case of search versions of algebraic telic problems.

\begin{remark}
Emphasizing the study of \emph{reductions} between telic problems is most appropriate due to the very well-developed classification theory of dynamical systems, and the fact that---as will be demonstrated below---writing down the form any reduction between telic problems must take comes very close to simply writing down the definition of a conjugacy of dynamical systems (recall two systems are conjugate if their dynamics are equivalent up to homeomorphism).

By showing that any ``natural" reduction between telic problems in fact forces the associated dynamical systems to share dynamical properties, we obtain an indirect method of proving that algorithms solving certain telic problems are forced to take certain forms.
In particular, notice that reductions to the trivial system $(I, \Id)$ correspond with algorithms simply solving the problem directly.
\end{remark}

\subsection{Ruling out natural telic reductions}

Suppose $(I, F)$ is a dynamical system computable by a BSS-machine, and $(I, F)$ has a complicated orbit structure so deciding $\ccRTELIC_{(I, F)}(g) \in \Telic_{(I, F)}$ may be challenging. Perhaps the most obvious approach in attempting to decide the problem efficiently is to find a more regular dynamical system $(I, T)$ so that instances of $\ccRTELIC_{(I, F)}(g)$ can be efficiently transformed into instances of some telic problem $\ccRTELIC_{(I, T)}(\hat{g})$ and solved in the nicer dynamical setting.

A most basic first attempt at seeing such a reduction through would involve constructing a function family $\{\psi_n\}_{n=1}^\infty$ with $\psi_n:I\rightarrow I$ satisfying the property that, for any $x \in I$ and $J=[a, b] \subseteq I$, $F^n(g(x)) \in J$ if and only if $T^n(\hat{g}(x)) \in \psi_n(J)$.
If such a function family $\{\psi_n\}$ were successfully constructed and the $\psi_n$ were computable in time polynomial in $n$ by a BSS-machine, then it could be used as a reduction from $\ccRTELIC_{(I, F)}(g)$ to $\ccRTELIC_{(I, T)}(\hat{g})$.

We call reductions of this form \emph{level 1 natural telic reductions}.
As the name suggests, we define a hierarchy of natural telic reductions of increasing complexity for both decision and search telic problems. We list what we deem to be the foundational natural telic reductions below, up to the most general form that must include any reduction between telic problems, and then supply a result ruling out the existence of some of these classes of reduction in certain cases.

\begin{remark}
Reductions by function families $\{\psi_n\}_{n=1}^\infty$ bring to mind reductions by non-uniform circuit families such as algebraic computation trees. However, we take these families to be uniformly computable.
\end{remark}

Now suppose the original condition on the $\psi_n$ in level 1 natural telic reductions is weakened so that, in reducing $\ccRTELIC_{(I, F)}(g)$ to $\ccRTELIC_{(I, T)}(\hat{g})$ we instead construct a function family $\{\psi_n\}_{n=1}^\infty$ with $\psi_n:I\rightarrow I$ satisfying the property that for any $J=[a, b] \subseteq I$, there is an $x \in I_{n^2}$ such that $F^n(g(x)) \in J$ if and only if there is a $y \in I_{n^2}$ such that $T^n(\hat{g}(y)) \in \psi_n(J)$. We call reductions of this type \emph{level 2 natural telic reductions}.

Let $\cI$ denote the space of all points and closed intervals contained in the unit interval $I$. We will identify $\cI$ with the closed triangle
\[
\{(c, r) \in \RR^2 : 0 \leq c \leq 1, 0\leq r, r \leq \min\{c, 1-c\}\} \subset \RR^2,
\]
so that each element $(c, r) \in \cI$ describes a closed interval for which $c$ is the midpoint and $r$ is the radius: $[c-r, c+r]$.
We treat functions $f:\cI\rightarrow \cI$ of $\cI$ as functions of the set $\cI \subset \RR^2$. In addition, we abuse notation and write $x \in J=(c,r) \in \cI$ to mean $x \in I$ is contained in the closed interval $[c-r,c+r]$.

Now consider function families $\{\psi_n\}_{n=1}^\infty$ with $\psi_n:\cI\rightarrow \cI$ (treated as functions of $\cI \subset \RR^2$) such that for any $J\in \cI$, there is an $x \in I_{n^2}$ such that $F^n(g(x)) \in J$ if and only if there is a $y \in I_{n^2}$ such that $T^n(\hat{g}(y)) \in \psi_n(J)$; that is, $F^n(g(I_{n^2})) \cap J \neq \emptyset \iff T^n(\hat{g}(I_{n^2})) \cap \psi_n(J) \neq \emptyset$. We shall call reductions using such function families \emph{level 3 natural telic reductions} from $\ccRTELIC_{(I, F)}(g)$ to $\ccRTELIC_{(I, T)}(\hat{g})$.
In fact it is not hard to see that level 3 natural telic reductions are ``enough," in the sense that any telic problem $\ccRTELIC_{(I, F)}(g)$ can be reduced to some telic problem $\ccRTELIC_{(I, T)}(\hat{g})$ by some level 3 natural telic reduction. Although, the reduction may not be computable by polynomial-time BSS-machines.

Even so, there exists the following most general form of reduction between algebraic telic problems.
\emph{Level 4 natural telic reductions} from $\ccRTELIC_{(I, F)}(g)$ to $\ccRTELIC_{(I, T)}(\hat{g})$ are induced by function families $\{\psi_n\}_{n=1}^\infty$ and $\{\omega_n\}_{n=1}^\infty$ with $\psi_n:\cI\rightarrow \cI$ and $\omega_n: \cI \rightarrow \NN$ such that for any $J \in \cI$, there is an $x \in I_{n^2}$ such that $F^n(g(x)) \in J$ if and only if there is a $y \in I_{(\omega_n(J))^2}$ such that $T^{\omega_n(J)}(\hat{g}(y)) \in \psi_n(J)$.

Any reduction from some $\ccRTELIC_{(I, F)} \in \Telic_{(I, F)}$ to some $\ccRTELIC_{(I, T)} \in \Telic_{(I, F)}$ sits at one of the four levels of natural telic reductions.
Furthermore, notice that each level of natural telic reductions is nontrivial, and are not so restrictive to be vacuous.
Indeed, it is not hard to identify nontrivial systems for which their associated telic problems are reducible to one another by level one natural telic reductions.

\begin{remark}
It may seem natural to augment many of the levels of natural telic reductions with additional function families $\{\sigma_n\}$ so that, for instance, we have $F^n(x) \in J$ if and only if $T^n(\sigma_n(x)) \in \psi_n(J)$. This cannot be done however unless all the $\sigma_n$ are identical: if not, this is no-longer a reduction to a telic problem since algebraic telic problems $\ccRTELIC_{(I, F)}(g)$ assume the homeomorphism $g$ is fixed, and cannot take form $\sigma_n\circ g$ for different $\sigma_n$ on each new instance $n$.
\end{remark}

The following theorem asserts that the level 1 and 2 natural reduction templates between telic problems cannot be used when constructing reductions between telic problems coming from sufficiently distinct dynamical systems.

\begin{theorem}\label{thmRulingOutNaturalTelic}
There exists dynamical systems $(I, F)$ and $(I, T)$, both computable by BSS-machines, such that there do not exist level 1 or 2 natural telic reductions from $\ccRTELIC_{(I, F)}(\Id)$ to $\ccRTELIC_{(I, T)}(\Id)$.
\end{theorem}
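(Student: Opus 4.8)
The plan is to exhibit two BSS-computable interval maps that are distinguished by a dynamical invariant which obstructs the order-theoretic rigidity that level 1 and 2 natural telic reductions impose. First I would recall (from \cref{propRotationInP} and its proof) the key observation that when $g = \Id$ and $T$ is a monotone map, the map $T^n \circ g$ is itself monotone, and hence for a target interval $J = [a,b]$ the set $\{x \in I_{n^2} : T^n(g(x)) \in J\}$ is an interval in $I_{n^2}$; in particular the image $T^n(I_{n^2})$ is a discrete monotone sequence, so its intersection pattern with intervals is highly constrained. For $(I, T)$ I would take a monotone homeomorphism (e.g.\ $T = \Id$ itself, or a simple contraction toward a fixed point), and for $(I, F)$ I would take a map with genuinely non-monotone, folding behavior — the standard choice being (a BSS-computable piecewise-affine version of) the tent map, so that $F^n$ has exponentially many laps and $F^n(I_{n^2})$ is spread across $I$ in a way no monotone sequence can mimic.

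The heart of the argument is a counting/combinatorial obstruction. Suppose, for contradiction, a level 1 natural telic reduction exists: a family $\{\psi_n\}$ with $\psi_n : I \to I$ (acting on intervals) such that for every $x \in I$ and every $J = [a,b] \subseteq I$, $F^n(g(x)) \in J \iff T^n(\hat g(x)) \in \psi_n(J)$. The crucial point is that $\psi_n$ must be the \emph{same} function for all $J$ and for \emph{all} $x$ simultaneously. Since $T = \Id$ makes $T^n(\hat g(x)) = \hat g(x)$ a fixed point of $x$ independent of $n$, the condition forces $F^n(g(x)) \in J \iff \hat g(x) \in \psi_n(J)$ for every interval $J$. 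I would then extract a contradiction by choosing two points $x_1, x_2 \in I$ with $\hat g(x_1) = \hat g(x_2)$ impossible since $\hat g$ is injective, so instead: choose $x_1 \neq x_2$ and two intervals $J_1, J_2$ that $F^n$ separates differently than $\hat g$ does. Concretely, because $F^n$ is not monotone, there are points $x_1 < x_2 < x_3$ with $F^n(g(x_1)) = F^n(g(x_3)) \neq F^n(g(x_2))$ (a fold), whereas $\hat g(x_1), \hat g(x_2), \hat g(x_3)$ are strictly monotone. Pick $J$ to be a tiny interval around $F^n(g(x_1))$: then $x_1, x_3$ satisfy the left side but $x_2$ does not, so $\psi_n(J)$ must contain $\hat g(x_1)$ and $\hat g(x_3)$ but exclude $\hat g(x_2)$ — impossible, since $\psi_n(J)$ is an interval and $\hat g(x_2)$ lies between $\hat g(x_1)$ and $\hat g(x_3)$. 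This rules out level 1 reductions; for level 2, where the biconditional is only over \emph{existence} of witnesses in $I_{n^2}$ (not pointwise), I would instead compare cardinalities: $|\{J \in \mathcal{J} : F^n(I_{n^2}) \cap J \neq \emptyset\}|$ versus $|\{J \in \mathcal{J} : T^n(I_{n^2}) \cap \psi_n(J) \neq \emptyset\}|$ for a suitable finite family $\mathcal{J}$ of disjoint intervals, using that $\psi_n$ is a function (so the map $J \mapsto \psi_n(J)$ cannot increase the number of distinct intervals hit) against the fact that $F^n$ folds $I_{n^2}$ onto a set meeting many more disjoint intervals than any monotone image of $I_{n^2}$ can.

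The main obstacle I anticipate is the level 2 case: there the pointwise rigidity is gone, and one must argue purely at the level of intersection patterns. The subtlety is that $\psi_n$ need not be injective or monotone, so one cannot directly bound how intersection patterns transform; the right move is probably to fix a family of pairwise disjoint target intervals $J_1, \dots, J_N$ and observe that the level 2 condition says $J_i$ is "hit by $F^n(I_{n^2})$" iff $\psi_n(J_i)$ is "hit by $T^n(I_{n^2})$", and then to choose $N$, the $J_i$, and $n$ so that the left-hand side holds for a number of indices strictly exceeding the maximum possible number of distinct intervals in $\{\psi_n(J_1), \dots, \psi_n(J_N)\}$ that a monotone (hence order-preserving, interval-image) map $T^n$ restricted to the discrete set $I_{n^2}$ can simultaneously meet — this maximum being controlled by $|I_{n^2}| = 2^{n^2}+1$ but, more to the point, by the fact that a monotone sequence of $2^{n^2}+1$ points meets a family of disjoint intervals in a "staircase" pattern with no repeats, whereas $F^n$ for the tent-like map can realize the same value (and hence meet the same $J_i$) from exponentially many distinct preimages, letting a small number of $F^n$-values cover all of $J_1, \dots, J_N$ while their preimages in $I_{n^2}$ are forced to be numerous and interleaved. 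Making the two counts provably incompatible — i.e.\ pinning down the extremal behavior of monotone maps on discretized intervals — is where the real work lies; everything else is bookkeeping about BSS-computability of the chosen $F$ and $T$, which is immediate for piecewise-affine maps.
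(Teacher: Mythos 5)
Your choice of systems (tent map against a monotone homeomorphism) and the guiding contrast (folding/non-injectivity versus injectivity) match the paper, but the execution diverges, and the level~2 half has a real gap. For level~1, your fold argument needs $\psi_n(J)$ to be an interval, which the definition does not supply: in a level~1 reduction $\psi_n:I\to I$ is just a function of points, so the image of a nondegenerate interval $J$ can be an arbitrary set (in particular it could be exactly $(F^n)^{-1}(J)$, a disjoint union of intervals, and your betweenness contradiction with $x_1<x_2<x_3$ evaporates). The paper sidesteps this entirely by letting $J$ be a \emph{degenerate} interval: taking $J=\{y\}$ forces $\psi_n\circ F^n=T^n$, and then two points with $F(x_1)=F(x_2)$ but $T(x_1)\neq T(x_2)$ give an immediate contradiction. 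Your argument is repairable by the same move (shrink your $J$ to the point $F^n(x_1)$, so that the single point $\psi_n(F^n(x_1))$ would have to ``contain'' two distinct values of $T^n$), but as written it rests on an unstated interval-image assumption.

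The level~2 part is where the missing idea matters. Your plan is to count how many disjoint targets $J_1,\dots,J_N$ are hit by $F^n(I_{n^2})$ versus how many of the images $\psi_n(J_i)$ can meet $T^n(I_{n^2})$, with monotonicity of $T$ supposedly limiting the latter. But the level~2 condition only asks whether each $\psi_n(J_i)$ meets the fixed finite set $T^n(I_{n^2})$, which has $2^{n^2}+1$ points; nothing prevents all $N$ images from meeting it, and the monotone/staircase structure of $T^n$ plays no role in that question, so no count mismatch is forced. You acknowledge this is ``where the real work lies,'' but it is precisely the step that fails. The paper's route is different: because the quantification over $J$ includes points, the existence-level biconditional pins down sets exactly (\cref{claimLvl2}), yielding $F^n(I_{n^2})=\psi_n^{-1}(T^n(I_{n^2}))$ and hence $\psi_n(F^n(I_{n^2}))=T^n(I_{n^2})$; the contradiction is then a cardinality collapse, since the tent map identifies dyadics symmetric about $1/2$, so $|F^n(I_{n^2})|<2^{n^2}$ while $T^n$ is a bijection and $|T^n(I_{n^2})|=2^{n^2}$. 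In short, the degenerate-target trick plus a cardinality comparison is the engine of both halves of the proof, and your proposal supplies neither for level~2.
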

\begin{proof}
Let $(I, F)$ be the tent-map system $F(x) = 2\min\{x, 1-x\}$, and let $(I, T)$ be a homeomorphism of the unit interval $T(x) = x^2$.

We begin by showing a level 1 natural telic reduction from $\ccRTELIC_{(I, F)}(\Id)$ to $\ccRTELIC_{(I, T)}(\Id)$ does not exist. Suppose there existed a family $\{\psi_n\}_{n=1}^\infty$ of maps $\psi_n:I\rightarrow I$ such that for any $x \in I$ and closed interval or point $J \subseteq I$, $F^n(x) \in J$ if and only if $T^n(x) \in \psi_n(J)$. Then fix an $n \geq 1$ and put $J = y \in I$ a point. Then $F^n(x) = y \iff T^n(x) = \psi_n(y)$. Hence, putting $\psi_n(F^n(x)) = \psi_n(y) = T^n(x)$, we conclude that $\psi_n\circ F^n = T^n$ since the property holds for all $y \in I$. But the tent-map $F$ is two-to-one while $T$ is one-to-one, so there exists $x_1\neq x_2$ so that $F(x_1) = F(x_2)$, but this implies $\psi_n(F(x_1)) = \psi_n(F(x_2)) = T(x_1)=T(x_2)$, a contradiction. Thus no such function family $\{\psi_n\}$ can exist, and level one natural telic reductions from $\ccRTELIC_{(I, F)}(\Id)$ to $\ccRTELIC_{(I, T)}(\Id)$ do not exist.

We now show impossibility of level 2 natural telic reductions. We may rewrite the definition of level 2 natural telic reductions into the following equivalent form: one telic problem is reducible to another telic problem by level 2 natural telic reductions if there is a function family $\{\psi_n\}$ with $\psi_n:I \rightarrow I$ such that for all $J =[a, b] \subseteq I$, $a\leq b$,
\[
J \cap F^n(I_{n^2}) \neq \emptyset \iff J \cap \psi_n^{-1}(T^n(I_{n^2}))\neq\emptyset \text{ for all } n \geq 1.
\]
We need the following elementary claim.
\begin{claim}\label{claimLvl2}
If $A, B \subseteq I$, and for all closed intervals and points $J \subseteq I$, $A \cap J \neq \emptyset \iff B \cap J \neq \emptyset$ then $A = B$.
\end{claim}
\begin{proof}[Proof of claim \cref{claimLvl2}]
Since $J$ is allowed to be a point, for any $x \in A$ take $J=[x,x]$, so by hypothesis $A \cap J \neq \emptyset$ if and only if $B \cap J \neq \emptyset$, hence $x \in B$. Thus $A \subseteq B$, and by a symmetric argument $B \subseteq A$, so $A=B$.
\end{proof}
Now, due to \cref{claimLvl2}, we have $F^n(I_{n^2}) = \psi_n^{-1}(T^n(I_{n^2}))$, and thus $\psi_n(F^n(I_{n^2})) = T^n(I_{n^2})$. But $T(x) = x^2$ is a bijection while $F(x) = 2\min\{x, 1-x\}$ is two-to-one and each iteration carries dyadic rationals on opposite sides of $1/2$ to the same point. Hence, we cannot have $\psi_n(F^n(I_{n^2})) = T^n(I_{n^2})$ since $|\psi_n(F^n(I_{n^2}))|< 2^{n^2}$ while $|T^n(I_{n^2})| = 2^{n^2}$, implying $\ccRTELIC_{(I,F)}(\Id)$ is not reducible to $\ccRTELIC_{(I, T)}(\Id)$ by level 2 natural telic reductions.
\end{proof}

The proof of \cref{thmRulingOutNaturalTelic} used only elementary concepts and simple dynamical systems. In our perspective this should be viewed positively: that level 1 and 2 reductions can be ruled out in the first place is not necessarily obvious, and the fact that this was accomplished using such elementary arguments suggests ruling out higher level reductions may be approachable using more sophisticated but already developed tools and techniques.
More may also be said about ruling out reductions between telic problems defined for a more general class of dynamical systems of higher-dimensional state spaces. See \cref{secGeneralReductions} for a general result proving that dynamical systems which do not share basic dynamical properties also fail to admit telic problems reducible to one another by natural reductions.

\section{Ruling out natural search reductions}\label{secSearchRed}

While the search version of natural telic reductions shares many similarities with the decision version, the additional requirement in search problems that a solution be returned is enough to force stronger results than in the decision case.

Let $(I, F)$ and $(I, T)$ be dynamical systems computable by a BSS-machine, and let $\ccRSTELIC_{(I, F)}(g)$ and $\ccRSTELIC_{(I, T)}(g)$ be associated algebraic search telic problems with homeomorphism $g$ in common. Solving an instance $(a, b, 1,\dots,1)$ of $\ccRSTELIC_{(I, F)}(g)$ involves returning an $s \in I_{n^2}$ such that $F^n(g(s)) \in [a, b]$, if one exists.
If $\ccRSTELIC_{(I, F)}(g)$ reduces to $\ccRSTELIC_{(I, T)}(g)$ in polynomial time, then there are functions $R$ and $S$ computable by a polynomial time BSS-machine, such that for any inputs $x, z \in \RR^\infty$, if $x$ is an instance of $\ccRSTELIC_{(I, F)}(g)$ then $R(x)$ is an instance of $\ccRSTELIC_{(I, T)}(g)$, and if $z$ is a correct output of $R(x)$ then $S(z)$ is a correct output of $x$.

In constructing such reductions, it is natural to define $R$ so that it carries an instance $x=(a, b, 1,\dots,1)$ of size $n+2$ from $\ccRSTELIC_{(I, F)}(g)$ to an instance $R(x)=(a',b', 1,\dots,1)$ also of size $n+2$ from $\ccRSTELIC_{(I, T)}(g)$. That is, map to instances involving the same number of iterations of the dynamical system.
But $S:I_{n^2} \rightarrow I_{n^2}$ for all $n \in \NN$, so it is natural to set $S \coloneqq \Id$ so that $R$ is now a function taking instances $x$ from $\ccRSTELIC_{(I, F)}(g)$ to instances $R(x)$ from $\ccRSTELIC_{(I, T)}(g)$ with the property that if $z$ is the solution to $R(x)$ then $z$ is the solution to instance $x$.
In addition, since the function $R(x)$ leaves the length of an instance invariant, for fixed $n$, it may be reinterpreted as a function taking closed intervals to closed intervals. Hence, we can identify $R(x)$ with a family of  functions $\{\eta_n\}_{n=1}^\infty$ with $\eta_n:\cI \rightarrow \cI$.

This leaves us in a situation similar to the analysis of reductions between decision telic problems: we may place restrictions on the functions $\eta_n$ and determine when reductions will go through.

Suppose $\ccRSTELIC_{(I, F)}(g)$ is reducible to $\ccRSTELIC_{(I, T)}(g)$ using a family of functions $\{\eta_n\}_{n=1}^\infty$ of the unit interval, $\eta_n:I\rightarrow I$, such that for all closed intervals $J \in \cI$ and all $x \in I_{n^2}$, $F^n(g(x)) \in J$ if and only if $T^n(g(x)) \in \eta_n(J)$. We call reductions using function families $\{\eta_n\}$ of this type \emph{level 1 natural search reductions}.

\begin{remark}
Level 1 natural search reductions restrict to reductions between search telic problems $\ccRSTELIC_{(I, F)}(g)$ and $\ccRSTELIC_{(I, T)}(g)$ with the same homeomorphism $g$, because considering different homeomorphisms $g$ and $\hat{g}$ is essentially equivalent. To see this, first note that $(g^{-1} \circ F \circ g)^n = g^{-1}\circ F^n \circ g$, and that if for all closed intervals $J \subseteq I$ and all $x \in I_{n^2}$, we have $F^n(g(x)) \in J$ if and only if $T^n(g(x)) \in \eta_n(J)$, then since $J$ can be a degenerate interval (a point), we have $\eta_n(F^n(g(x))) = T^n(g(x))$.
Put $\tilde{F} = g^{-1}\circ F \circ g$ and $\tilde{T} = \hat{g}^{-1} \circ T \circ \hat{g}$, and define $\tilde{\eta}_n = \hat{g}^{-1} \circ \eta_n \circ g$. Then using the above facts we have
\[
\tilde{\eta}_n\circ \tilde{F}^n = \hat{g}^{-1} \circ \eta_n\circ F^n\circ g = \hat{g}^{-1} \circ T^n\circ \hat{g} = \tilde{T}^n
\]
So we are left with $\tilde{\eta}_n(\tilde{F}^n(x)) = \tilde{T}^n(x)$ as when the two telic problems used identical homeomorphisms $g$.
\end{remark}

Moving up a level of complexity, we drop the restriction that the $\eta_n$ are transformations of the unit interval, and instead let them be functions of the space of closed intervals $\cI$. If $\ccRSTELIC_{(I, F)}(g)$ is reducible to $\ccRSTELIC_{(I, T)}(\hat{g})$ by a function family $\{\eta_n\}_{n=1}^\infty$ such that the $\eta_n:\cI \rightarrow \cI$ satisfy the property that for all $x \in I$ and $J \in \cI$, $F^n(g(x)) \in J$ if and only if $T^n(\hat{g}(x)) \in \eta_n(J)$, then we shall say $\ccRSTELIC_{(I, F)}(g)$ is reducible to $\ccRSTELIC_{(I, T)}(\hat{g})$ by \emph{level 2 natural search reductions}.

Level 3 natural search reductions will be similarly defined, only we take $x \in I_{n^2}$ instead of $I$: if $\ccRSTELIC_{(I, F)}(g)$ is reducible to $\ccRSTELIC_{(I, T)}(\hat{g})$ by a function family $\{\eta_n\}_{n=1}^\infty$, $\eta_n:\cI \rightarrow \cI$, such that for all $x \in I_{n^2}$ and $J \in \cI$, $F^n(g(x)) \in J$ if and only if $T^n(\hat{g}(x)) \in \eta_n(J)$, then we shall say $\ccRSTELIC_{(I, F)}(g)$ is reducible to $\ccRSTELIC_{(I, T)}(\hat{g})$ by \emph{level 3 natural search reductions}.

Level 4 natural search reductions are at a level of generality such that any search telic problem $\ccRSTELIC_{(I, F)}(g)$ is reducible to any other $\ccRSTELIC_{(I, T)}(\hat{g})$ via a level 4 natural search reduction, although perhaps not in polynomial-time. We say $\ccRSTELIC_{(I, F)}(g)$ is reducible to $\ccRSTELIC_{(I, T)}(\hat{g})$ by \emph{level 4 natural search reductions} if there exists a family of functions $\{\eta_n\}_{n=1}^\infty$, $\eta_n:\cI\rightarrow \cI$, such that for all $J \in \cI$ and for all $x \in I_{n^2}$, $T^n(\hat{g}(x)) \in \eta_n(J)$ implies $F^n(g(x)) \in J$, and $T^n(\hat{g}(x)) \not\in \eta_n(J)$ for all $x \in I_{n^2}$ implies $F^n(g(x)) \not\in J$ for all $x \in I_{n^2}$.

The subtle distinction between level 3 and level 4 natural search reductions is that the problems in level 3 reductions must have the same number of solutions per instance, while this need not be the case for level 4 natural search reductions. For example, $\ccRSTELIC_{(I, T)}(\hat{g})$ may have only one solution while $\ccRSTELIC_{(I, F)}(g)$ may have many, but so long as the solution to an instance of $\ccRSTELIC_{(I, T)}(\hat{g})$ is also a solution to the corresponding instance of $\ccRSTELIC_{(I, F)}(g)$, level 4 reductions go through.

The following theorem asserts there are dynamical systems admitting search telic problems for which level 1, 2, and 3 natural search reductions do not exist. As in the case of the decision problems, the level 1 and 2 natural search reductions specifically fail to work in certain cases for rather elementary reasons. Even so, this is perhaps not \emph{a priori} expected since neither the level 1 nor the level 2 natural search reductions are trivial---indeed the $\eta_n:\cI\rightarrow \cI$ in level 2 natural search reductions are arbitrary functions from points and intervals to points and intervals. As such, the following theorem supports the intuition that any efficient reduction from a telic problem coming from a chaotic system to a telic problem coming from a regular system will certainly not be trivial.

\begin{theorem}\label{thmNoNaturalSearch}
There exists a dynamical system $(I, F)$ with positive topological entropy and a regular dynamical system $(I, T)$ with zero topological entropy, both computable by BSS-machines, as well as homeomorphisms $g:I\rightarrow I$ and $\hat{g}:I\rightarrow I$ computable by BSS-machines, for which $\ccRSTELIC_{(I, T)}(\hat{g}) \in \ccP_\RR$, and $\ccRSTELIC_{(I, F)}(g)$ is not reducible to $\ccRSTELIC_{(I, T)}(\hat{g})$ by level 1, 2, or 3 natural search reductions.
\end{theorem}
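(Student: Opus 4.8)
The plan is to reuse the two systems featured in the proof of \cref{thmRulingOutNaturalTelic}. Take $(I,F)$ to be the tent map $F(x)=2\min\{x,1-x\}$, which is BSS-computable and has topological entropy $\log 2>0$, and take $(I,T)$ to be the increasing homeomorphism $T(x)=x^2$, which is BSS-computable, has topological entropy $0$ (being a monotone self-map of the interval), and satisfies $\ccRSTELIC_{(I,T)}(\Id)\in\ccP_\RR$ by the binary-search argument of \cref{propRotationInP} --- that proof applies since $s\mapsto T^n(s)=s^{2^n}$ is order-preserving on $I$ and computable in $O(n)$ BSS-steps by repeated squaring, so a witness in $I_{n^2}$, if one exists, is located in $\poly(n)$ steps. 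I would set $g=\hat g=\Id$ throughout; by the Remark following the level 1 definition, taking the homeomorphisms equal loses nothing. The whole argument then consists of feeding each hypothetical reduction family $\{\eta_n\}$ a carefully chosen \emph{degenerate} interval $J=\{y\}\in\cI$ and extracting a contradiction from the mismatch between the two-to-one folding of $F$ and the monotonicity of $T$.

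For \emph{level 1} reductions ($\eta_n:I\to I$), evaluating the defining biconditional at degenerate $J$ forces $\eta_n(F^n(x))=T^n(x)$ for every $x\in I_{n^2}$. But $F^n(0)=F^n(1)=0$ while $0\neq 1$, so $\eta_n(0)=T^n(0)=0$ and $\eta_n(0)=T^n(1)=1$ simultaneously --- a contradiction. (More generally $F$ folds: $F(1/4)=F(3/4)=1/2$, etc.) For \emph{level 2} reductions ($\eta_n:\cI\to\cI$, biconditional over all $x\in I$), taking $J=\{1/2\}$ yields the set identity $(F^n)^{-1}(1/2)=(T^n)^{-1}(\eta_n(\{1/2\}))$ inside $I$. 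The left side has exactly $2^n$ points, since the backward orbit of $1/2$ under the tent map never reaches the unique point $1$ whose fiber is a singleton. The right side is the preimage, under the homeomorphism $T^n$, of a nonempty point or closed interval, hence itself a nonempty point or closed interval: its cardinality is $1$ or that of the continuum, never the finite value $2^n\geq 2$. Contradiction.

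The case with genuine content is \emph{level 3} reductions ($\eta_n:\cI\to\cI$, biconditional over $x\in I_{n^2}$). Evaluating at a degenerate $J=\{y\}$ with $y\in F^n(I_{n^2})$ gives
\[
(F^n)^{-1}(y)\cap I_{n^2}=\{x\in I_{n^2}:T^n(x)\in\eta_n(\{y\})\}.
\]
The right-hand side is \emph{order-convex} in $I_{n^2}$: since $s\mapsto s^{2^n}$ is strictly increasing on $I$, the $T^n$-preimage of any closed interval is again a closed interval, so its trace on $I_{n^2}$ is a block of \emph{consecutive} precision-$n^2$ dyadic rationals. Tent-map fibers are not of this shape: already at $n=2$ one has $(F^2)^{-1}(1/2)=\{1/8,3/8,5/8,7/8\}=\{2/16,6/16,10/16,14/16\}\subset I_4$, which is not a consecutive block in $I_4$. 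As $1/2\in I_2=F^2(I_4)$, the instance $J=\{1/2\}$ is legitimate, and the two descriptions of the same set contradict one another. Hence no level 1, 2, or 3 natural search reduction from $\ccRSTELIC_{(I,F)}(\Id)$ to $\ccRSTELIC_{(I,T)}(\Id)$ exists.

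The main obstacle is the level 3 step: unlike levels 1 and 2, it cannot be dispatched by a bare cardinality count, and one must exploit the contrast between $T$ being monotone (so that $T^n$-preimages of intervals meet the discretization $I_{n^2}$ in order-convex blocks) and $F$ folding (so that the fibers of $F^n$ spread across arithmetic progressions in $I_{n^2}$). The leftover work is routine: confirming BSS-computability of both systems, the entropy values $\log 2$ and $0$, the $\ccP_\RR$ bound above, and the elementary identity $(F^2)^{-1}(1/2)=\{1/8,3/8,5/8,7/8\}$.
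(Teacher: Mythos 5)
Your proposal is correct, and levels 1 and 2 run essentially parallel to the paper's argument (folding of the positive-entropy map against injectivity of $T$; the paper phrases the level 2 contradiction via connectedness of $(T^n)^{-1}(\eta_n(J))$ versus disconnectedness of $(F^n)^{-1}(J)$, you phrase it via cardinality --- same substance). The genuine divergence is at level 3. The paper abandons the pair used for levels 1--2: it keeps the expanding map but precomposes with the specially built irrational-slope homeomorphism $\alpha$, targets the trivial system $T=\Id$, and invokes the quantitative counting statement \cref{lem:telicPerturb} to force $|\eta_n(J)|\geq 1/3$ and then over-count witnesses in $\eta_n(J)\cap I_{n^2}$. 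You instead stay with $g=\hat g=\Id$ and the monotone target $T(x)=x^2$, and argue structurally: $(T^n)^{-1}$ of any element of $\cI$ is a point or closed interval, so its trace on $I_{n^2}$ is order-convex, while the tent fiber $(F^n)^{-1}(1/2)\cap I_{n^2}$ is a $2^{-n}$-spaced set of $2^n$ points (already $\{2/16,6/16,10/16,14/16\}$ at $n=2$), which is not order-convex --- contradiction directly from the level 3 biconditional at a degenerate $J$. Your route is more elementary (no auxiliary homeomorphism, no perturbation lemma) and has the incidental advantage of exhibiting a \emph{single} quadruple $(F,T,g,\hat g)$ ruling out all three levels simultaneously, which matches the literal quantifier structure of \cref{thmNoNaturalSearch} more cleanly than the paper's proof, which switches $(T,g)$ between the level 1--2 and level 3 parts. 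What the paper's heavier machinery buys is quantitative information --- the dispersion of solutions under small perturbations of the target and the ensuing $\Omega(n^2)$ arithmetic-circuit-depth remark --- which your order-convexity argument does not yield. Your treatment of the remaining claims (entropy $\log 2$ for the tent map, zero entropy for an interval homeomorphism, and $\ccRSTELIC_{(I,T)}(\Id)\in\ccP_\RR$ via the monotone binary-search argument of \cref{propRotationInP}) is at the same level of detail as the paper's and is fine.
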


We require the following preparatory lemmas and notation. Recall the expanding map $E_2:I\rightarrow I$ is defined as $E_2(x) = 2x\mod 1$.\footnote{Technically the map is of the circle $S^1$, not the unit interval, but these two situations are identical for our analysis.}

\begin{lemma}\label{lem:2nToOne}
Let $E_2:I\rightarrow I$ be the expanding map. Then for all but finitely many $x \in I$, the cardinality of $(E^n_2)^{-1}(x)\subset I$ is $2^n$ for all $n \geq 1$.
\end{lemma}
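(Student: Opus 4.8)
The claim is really just careful bookkeeping of preimages under the doubling map, so the plan is to set up the one-step preimage structure and then iterate. The map $E_2(x) = 2x \bmod 1$ has, for a point $x \in I = [0,1]$ identified with the circle $S^1 = \RR/\ZZ$, exactly the two preimages $x/2$ and $(x+1)/2$, both lying in $I$; as a map of the half-open interval $[0,1)$ this is literally two-to-one everywhere. First I would make precise the (harmless) discrepancy between the interval and circle pictures: the only place the interval map $E_2 \colon [0,1] \to [0,1]$ differs from the circle map is at the endpoint behavior of $1$, and this affects only the finite set $\{0, 1\}$ and its finitely many iterated preimages, which is exactly the finite exceptional set we are allowed to discard. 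So the statement to prove is: for every $x$ not in the orbit-preimage tree of the endpoint, $|(E_2^n)^{-1}(x)| = 2^n$.

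The main step is an induction on $n$. For $n = 1$, the two preimages $x/2$ and $(x+1)/2$ are distinct for all $x$, giving cardinality $2$. For the inductive step, I would use $(E_2^{n+1})^{-1}(x) = (E_2^n)^{-1}\bigl(E_2^{-1}(x)\bigr) = (E_2^n)^{-1}(x/2) \cup (E_2^n)^{-1}((x+1)/2)$. By the inductive hypothesis each of the two sets on the right has cardinality $2^n$, provided neither $x/2$ nor $(x+1)/2$ lies in the exceptional set; and these two sets are disjoint because $E_2^n$ maps $(E_2^n)^{-1}(x/2)$ into $\{x/2\}$ and $(E_2^n)^{-1}((x+1)/2)$ into $\{(x+1)/2\}$, which are distinct points. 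Hence the union has cardinality $2 \cdot 2^n = 2^{n+1}$. An equivalent and perhaps cleaner phrasing: the $2^n$ preimages of $x$ are exactly the points $(x + j)/2^n$ for $j = 0, 1, \dots, 2^n - 1$, and these are pairwise distinct real numbers in $[0,1)$; one then just checks $E_2^n$ sends each to $x$, which is immediate from $2^n \cdot (x+j)/2^n = x + j \equiv x \pmod 1$.

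The only real subtlety — and the one place to be slightly careful rather than the ``main obstacle,'' since there is no genuine obstacle here — is identifying the finite exceptional set and checking it is genuinely finite and genuinely the only obstruction. The issue is purely that on $I = [0,1]$ (as opposed to $[0,1)$), the point $x = 0$ has $1$ as an extra preimage, and thus certain dyadic rationals acquire an anomalous preimage count; but every such point is a $2^{-m}$-scaled integer, so for each fixed $n$ only finitely many $x$ are affected, and one checks that the set of $x$ that are affected for \emph{some} $n$ is still finite (it is contained in the backward orbit of $0$ under $E_2$, which is the set of dyadic rationals of the form $j/2^m$ with $m \le n$ — actually one should just say: work on $S^1$ where the count is exactly $2^n$ with no exceptions, and transferring to $I$ changes the count only at the finitely many points whose forward orbit hits the identification point). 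I would state the lemma's proof in the circle picture to make the count exact and clean, then add one sentence noting the interval version differs only on a finite set, which suffices.
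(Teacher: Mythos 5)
Your proposal is correct and follows essentially the same route as the paper, whose entire proof is the single observation that $E_2^n(x) = 2^n x \bmod 1$, so that the preimages of $x$ are the $2^n$ points $(x+j)/2^n$, $j = 0,\dots,2^n-1$; your explicit-formula phrasing is exactly this, with the induction as an equivalent alternative.

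One imprecision is worth fixing. Your various descriptions of the exceptional set (``$\{0,1\}$ and its finitely many iterated preimages,'' ``the backward orbit of $0$,'' ``the points whose forward orbit hits the identification point'') all name infinite sets: the backward orbit of $0$ under $E_2$ is the set of all dyadic rationals in $I$. Since the lemma, as quantified, requires a single finite exceptional set that works simultaneously for every $n\geq 1$, those characterizations taken literally would not yield the statement. Fortunately the true exceptional set is just $\{0,1\}$, independent of $n$: for $x\in(0,1)$ the preimages under $E_2^n$ are exactly the $2^n$ distinct points $(x+j)/2^n\in(0,1)$, and $y=1$ is never among them because $E_2^n(1)=0\neq x$; the only anomalies are $x=0$, which acquires the extra preimage $y=1$ (count $2^n+1$), and $x=1$, which has no preimages at all under the mod-$1$ convention. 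With that one-line correction your argument --- or, equivalently, the paper's one-sentence version of it --- goes through.
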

\begin{proof}
Proof follows immediately from observing that $E^n_2(x) = 2^nx \mod 1$.
\end{proof}

Fix the following piecewise linear function of the interval $\alpha:I\rightarrow I$, defined by
\[
\alpha(x) = \begin{cases}
\frac{1}{\sqrt{2}}x &\text{whenever } 0\leq x \leq 2-\sqrt{2}\\
\sqrt{2}x + (1-\sqrt{2}) &\text{whenever } 2-\sqrt{2}< x \leq 1.
\end{cases}
\]
The function $\alpha$ is a BSS-computable homeomorphism with the following property, used in proving \cref{lem:telicPerturb} below.

\begin{lemma}\label{lem:smallPreimage}
For every $k, n \geq 1$ and $s \in I_{n}$, there is no $s' \in I_n$ with $s'\neq s$ such that $E_2^k(\alpha(s))=E_2^k(\alpha(s'))$.
\end{lemma}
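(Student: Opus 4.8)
The plan is to exploit the specific numerical structure of $\alpha$: its two linear branches have slopes $1/\sqrt 2$ and $\sqrt 2$, which are irrational, while the points we care about, dyadic rationals in $I_n$, are rational. First I would observe that $\alpha$ is a homeomorphism, so $\alpha(s)=\alpha(s')$ forces $s=s'$; thus if $s\neq s'$ then $\alpha(s)\neq\alpha(s')$, and the content of the lemma is that the expanding map $E_2$ cannot glue these two distinct images together after $k$ iterations. Since $E_2^k(x)=2^k x \bmod 1$ by Lemma \ref{lem:2nToOne}'s proof, the equation $E_2^k(\alpha(s))=E_2^k(\alpha(s'))$ is equivalent to $2^k(\alpha(s)-\alpha(s'))\in\ZZ$, i.e.\ $\alpha(s)-\alpha(s')$ is a dyadic rational (of precision at most $k$). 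So the whole lemma reduces to the claim: \emph{for distinct $s,s'\in I_n$, the difference $\alpha(s)-\alpha(s')$ is irrational.}

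To prove that claim I would do a case analysis on which branch of $\alpha$ contains $s$ and which contains $s'$. Write $\theta=2-\sqrt2$ for the breakpoint. If both $s,s'$ lie in the first branch, then $\alpha(s)-\alpha(s')=\tfrac{1}{\sqrt2}(s-s')$; since $s\neq s'$ are rational, $s-s'$ is a nonzero rational, and dividing by $\sqrt2$ gives an irrational number. Symmetrically, if both lie in the second branch, $\alpha(s)-\alpha(s')=\sqrt2\,(s-s')$, again irrational. The mixed case is where $s$ is in one branch and $s'$ in the other: say $s\le\theta<s'$; then
\[
\alpha(s)-\alpha(s')=\tfrac{1}{\sqrt2}s-\bigl(\sqrt2\,s'+(1-\sqrt2)\bigr)=\tfrac{1}{\sqrt2}s-\sqrt2\,s'-1+\sqrt2 .
\]
Collecting the rational and $\sqrt2$-parts, this equals $(-1)+\sqrt2\bigl(\tfrac12 s - s' +1\bigr)$, and since $\tfrac12 s-s'+1$ is rational, this is rational only if that coefficient vanishes, i.e.\ $s'=\tfrac12 s+1$. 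But $s\le 1$ forces $s'\le \tfrac32$, and in fact $s'=\tfrac12 s+1\ge 1$ with equality only if $s=0$; since $s'\le 1$ we would need $s'=1$ and $s=0$, which does not satisfy $s>\theta$ — wait, here $s'$ is in the second branch so $s'>\theta$ is automatic, but $s'=1$ and $s=0$ must be ruled out separately. I would check directly that $\alpha(0)=0$ and $\alpha(1)=1$, so $\alpha(1)-\alpha(0)=1$, a dyadic rational — so this edge case genuinely needs attention. The resolution is that $E_2^k(1)=E_2^k(0)=0$ for all $k$ only if we regard $0$ and $1$ as the same point of the circle; but as points of $I_n$ with $n\ge 1$ we have $0\neq 1$, so strictly the claim as stated is about this one exceptional pair.

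The main obstacle, then, is handling this boundary coincidence $\alpha(0)=0$, $\alpha(1)=1$: I expect the intended reading (consistent with the footnote that $E_2$ is ``really'' a map of the circle, where $0$ and $1$ are identified) is that $0$ and $1$ are not regarded as distinct points, so the pair $(s,s')=(0,1)$ does not count; alternatively one restricts attention to $I_n$ as a subset of the half-open interval $[0,1)$. I would add a sentence making this convention explicit, and then the case analysis above — every genuine pair of distinct points forces an irrational difference — completes the proof. Modulo that convention, the argument is entirely elementary: it is just the observation that $\alpha$ mixes rational inputs with the irrational slopes $1/\sqrt2$ and $\sqrt2$ in a way that can never produce a dyadic output difference, while $E_2^k$ can only collapse two points whose difference \emph{is} a dyadic rational.
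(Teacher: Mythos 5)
Your proposal is correct and follows essentially the same route as the paper's proof: reduce the equation $E_2^k(\alpha(s))=E_2^k(\alpha(s'))$ to $\alpha(s)-\alpha(s')\in\QQ$, do a branch-by-branch case analysis showing the difference has a nonzero $\sqrt{2}$-coefficient unless $s=s'$, and dispose of the only rational-difference pair $(s,s')=(0,1)$ by the mod-$1$ identification of $0$ and $1$ (the paper does exactly this, working with $s=a/2^n$, $s'=b/2^n$, $0\le a,b\le 2^n-1$, and noting $I_r=[0,1)\cap\DD_r$ for the exceptional case). No gap; your explicit flagging of the $(0,1)$ boundary case matches the paper's own resolution.
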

\begin{proof}
Recall $E_2^k(x) = 2^k(x) \mod 1$.
Let $\{z\}$ denote the fractional part of $z \in \RR$ (i.e.\ $z \mod 1$). Then for every $k, n \geq 1$, we aim to show there are no distinct $s, s' \in I_n$ such that $\{2^k\alpha(s)\} = \{2^k\alpha(s')\}$.

Set $s=\frac{a}{2^{n}},\ s'=\frac{b}{2^{n}}$ with $0\le a,b\le 2^{n}-1$.
If $\{2^{k}\alpha(s)\}=\{2^{k}\alpha(s')\}$, then $2^{k}\big(\alpha(s)-\alpha(s')\big)\in\ZZ$, which implies $\alpha(s)-\alpha(s')\in 2^{-k}\ZZ \subset\QQ$.
Note
\[
\alpha(s) = \alpha\left(\tfrac{a}{2^{n}}\right)=
\begin{cases}
\frac{a}{2^{n}\sqrt2}=\frac{a}{2^{n+1}}\sqrt2, & \text{when }\frac{a}{2^{n}}\le 2-\sqrt2,\\
\sqrt2\,\frac{a}{2^{n}}+1-\sqrt2
=1+\Big(\frac{a}{2^{n}}-1\Big)\sqrt2, & \text{when }\tfrac{a}{2^{n}}> 2-\sqrt2.
\end{cases}
\]
Hence the difference $\alpha(s)-\alpha(s')$ is one of: 
\begin{enumerate}
\item $\frac{a-b}{2^{n+1}}\sqrt{2}$ when $\frac{a}{2^{n}},\frac{b}{2^{n}}\le 2-\sqrt2$, 
\item $\frac{a-b}{2^{n}}\sqrt2$ when $\frac{a}{2^{n}},\frac{b}{2^{n}}> 2-\sqrt2$,
\item or takes form $-1+\Big(1+\frac{a}{2^{n+1}}-\frac{b}{2^{n}}\Big)\sqrt2$ when there is a mix of the prior cases.
\end{enumerate}
In cases (1) and (2), if $a\ne b$, the difference is irrational (nonzero $\sqrt2$-coefficient), contradicting our prior conclusion that $\alpha(s)-\alpha(s')\in \QQ$ when $\{2^{k}\alpha(s)\}=\{2^{k}\alpha(s')\}$; hence $a=b$ and $s=s'$.
In case (3), writing $\alpha(s)-\alpha(s')=A+B\sqrt2$ with $A=-1$, rationality forces $B=0$, i.e.
\[
1+\frac{a}{2^{n+1}}-\frac{b}{2^{n}}=0 \iff a-2b=-2^{n+1}.
\]
With $0\le a,b\le 2^{n}-1$ there is no solution, so case (3) cannot occur.
As such, we conclude $\alpha(s)-\alpha(s')\in\mathbb Q$, which implies $s=s'$. By the contrapositive, if $s\ne s'$ then $\{2^{k}\alpha(s)\}\ne \{2^{k}\alpha(s')\}$.
The exceptional case of $s=0,s'=1$ can be ruled out because $E_2(x)$ is defined mod 1, with $0$ and one identified, so $I_r = [0,1)\cap \DD_r$.
\end{proof}

\begin{lemma}\label{lem:telicPerturb}
Let $E_2:I\rightarrow I$ be the expanding map and let $n\in \ZZ^+$ and $m \geq 2$.
Put $c_1 = E_2^n(\alpha(s))$ for $s \in I_{n^m}$. There exists points $c_2,c_3 \in I$, with $c_1,c_2,c_3$ distinct, for which $|c_1-c_2|\leq O(2^{-n^m/2})$ and $|c_1-c_3|\leq O(2^{-n^m/2})$, such that
\begin{enumerate}
\item $c_2 \not\in E_2^n(\alpha(I_{n^m}))$, and
\item if $s' \in I_{n^m}$ is such that $E_2^n(\alpha(s'))=c_3$, then $|s-s'| \geq 1/3$.
\end{enumerate}
In addition, for every interval $J = [a, b] \subset I$ with $b-a = 1/2^n$, there are $O(2^{n^m-n})$ points $s \in I_{n^m}$ such that $E_2^n(\alpha(s)) \in J$.
\end{lemma}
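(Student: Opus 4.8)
My plan is to prove the three assertions separately. The counting bound and the point \(c_2\) are short; the point \(c_3\) carries the content and is where \cref{lem:smallPreimage} is used. One observation I would record at the outset: both affine pieces of \(\alpha\) have coefficients in \(\QQ(\sqrt2)\) (note \(s/\sqrt2=\tfrac s2\sqrt2\)), so, \(E_2^n\) only subtracting an integer, \(E_2^n(\alpha(I_{n^m}))\subseteq\QQ(\sqrt2)\); in particular \(c_1\in\QQ(\sqrt2)\cap[0,1)\).

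For the counting bound I would argue by measure. Writing \(\Phi=E_2^n\circ\alpha\): on each of its \(2^n\) affine pieces \(E_2^n\) maps an interval of length \(2^{-n}\) onto \([0,1)\) with slope \(2^n\), so for \(J\subseteq I\) with \(|J|=2^{-n}\) the set \((E_2^n)^{-1}(J)\) is a disjoint union of \(2^n\) intervals of length \(2^{-2n}\), hence of measure \(2^{-n}\). As \(\alpha^{-1}\) is piecewise affine with two pieces of slope \(\le\sqrt2\), the set \(\Phi^{-1}(J)\) is a union of \(\le 2^n+1\) intervals of total measure \(\le\sqrt2\,2^{-n}\); since an interval of length \(\ell\) meets \(I_{n^m}\) in \(\le\ell\,2^{n^m}+1\) points, this gives \(\#\{s\in I_{n^m}:\Phi(s)\in J\}\le\sqrt2\,2^{n^m-n}+2^n+1=O(2^{n^m-n})\) (here \(m\ge2\) makes \(n^m-n\ge n\) for \(n\ge2\), while for \(n=1\) one has \(n^m=1\) and the bound is \(O(1)\)). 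For \(c_2\): as \(\sqrt3\notin\QQ(\sqrt2)\), I would take \(c_2=c_1\pm\tfrac{\sqrt3}{4}\,2^{-n^m}\) with the sign chosen so \(c_2\in I\); then \(c_2\ne c_1\), \(|c_1-c_2|\le 2^{-n^m}\le 2^{-n^m/2}\), and \(c_2\notin\QQ(\sqrt2)\supseteq E_2^n(\alpha(I_{n^m}))\), which is (1). I will arrange \(c_3\in\QQ(\sqrt2)\) too, so that \(c_3\ne c_1\) will make the three points distinct.

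The construction of \(c_3\) is the heart of the matter. First I would set aside the finitely many pairs for which \(n^m\) is below an absolute constant (namely \(n=1\) and \((n,m)=(2,2)\)): there \(2^{-n^m/2}\) is bounded below, so it suffices to take \(s'\) to be any point of \(I_{n^m}\) with \(|s-s'|\ge\tfrac13\) (the other point of \(I_1=\{0,\tfrac12\}\), or one of \(0,\tfrac{15}{16}\in I_4\)) and set \(c_3=E_2^n(\alpha(s'))\in\QQ(\sqrt2)\); since \(n^m\le 4\) here, \(|c_1-c_3|<1\le 4\cdot 2^{-n^m/2}\), and \(c_3\ne c_1\) by \cref{lem:smallPreimage}. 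So suppose \(n^m\) is large. Put \(a=\alpha(s)\in[0,1)\); exactly one of \(a\pm\tfrac12\) lies in \([0,1]\), call it \(a'\) and set \(t_*=\alpha^{-1}(a')\). Since \(|a'-a|=\tfrac12\) and each piece of \(\alpha^{-1}\) has slope \(\ge\tfrac1{\sqrt2}\), we get \(|t_*-s|\ge\tfrac1{2\sqrt2}>\tfrac13\). Let \(s'\in I_{n^m}\) be \(t_*\) rounded to a multiple of \(2^{-n^m}\) — to the nearest one, except when \(c_1\) lies within \(2\cdot 2^{-n^m/2}\) of \(\{0,1\}\), in which case round so as to prevent the reduction modulo \(1\) below from wrapping. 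This rounding moves \(t_*\) by \(\le 2^{-n^m}\), so \(|s'-s|\ge\tfrac1{2\sqrt2}-2^{-n^m}>\tfrac13\) for \(n^m\) large. Now put \(c_3=E_2^n(\alpha(s'))\in\QQ(\sqrt2)\). Because \(\alpha\) has slope \(\le\sqrt2\) and \(2^na'\equiv 2^na\pmod1\), one gets \(E_2^n(\alpha(s'))=(c_1+\delta)\bmod1\) with \(\delta=2^n(\alpha(s')-a')\), \(|\delta|\le\sqrt2\,2^{n-n^m}\), and the rounding choice rules out wrapping, so
\[
|c_3-c_1|=|\delta|\le\sqrt2\,2^{n-n^m}\le\sqrt2\,2^{-n^m/2},
\]
the last inequality because \(n^m\ge n^2\ge 2n\) for \(n\ge2\). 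Finally \(s'\ne s\), so \cref{lem:smallPreimage} gives both \(c_3=E_2^n(\alpha(s'))\ne E_2^n(\alpha(s))=c_1\) and that \(s'\) is the \emph{only} element of \(I_{n^m}\) with \(E_2^n(\alpha(s'))=c_3\); hence any such \(s'\) has \(|s-s'|>\tfrac13\), which is (2).

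The step I expect to be the obstacle is exactly this last one: producing a witness \(s'\) that is far from \(s\) in the domain but maps within \(O(2^{-n^m/2})\) of \(c_1\) in the target. The mechanism is that \(E_2^n\) has \(2^n\) affine pieces, so its fibre over a point is spread all across \([0,1]\) (cf.\ \cref{lem:2nToOne}); the homeomorphism \(\alpha^{-1}\) distorts distances by at most a bounded factor, so one point of that fibre stays at distance \(>\tfrac13\) from \(s\); and snapping that point to \(I_{n^m}\) costs \(2^{-n^m}\) in the domain, hence at most \(2^n2^{-n^m}\le 2^{-n^m/2}\) in the target — which is precisely where \(m\ge2\) is used, and why the exponent in the statement is \(n^m/2\) rather than \(n^m\). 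The only genuinely delicate part is the bookkeeping of the sign of \(a'\) and the rounding direction keeping \(s'\in[0,1)\), \(|s'-s|>\tfrac13\), and \(c_3\) away from the wrap‑point \(0\); these reduce to finitely many benign boundary cases.
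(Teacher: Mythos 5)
Your proof is correct, but for the central item (2) it takes a genuinely different route from the paper. The paper's argument is a branch-plus-density one: fix a branch interval $[\tfrac{p}{2^n},\tfrac{p+1}{2^n})$ of $E_2^n$ at distance at least $1/3$ from $s$, let $W$ be the preimage inside that branch of the window $V=[c_1-\epsilon,c_1+\epsilon]$ with $\epsilon=O(2^{-n^m/2})$ (an interval of length $2\epsilon/2^n$), and argue that $W$ must meet $\alpha(I_{n^m})$ because the roughly $2^{n^m-n}$ points of $\alpha(I_{n^m})$ per branch are spread out linearly---this is exactly where $m\ge 2$ enters---after which \cref{lem:smallPreimage} supplies uniqueness of the preimage, just as in your argument. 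You instead build the witness explicitly: since $2^{n}\cdot\tfrac12\in\ZZ$ for $n\ge1$, the point $a'=\alpha(s)\pm\tfrac12$ has the same $E_2^n$-image $c_1$; bi-Lipschitzness of $\alpha^{-1}$ puts $t_*=\alpha^{-1}(a')$ at distance at least $\tfrac{1}{2\sqrt2}>\tfrac13$ from $s$, and rounding $t_*$ to $I_{n^m}$ perturbs the image by at most $\sqrt2\,2^{n-n^m}\le\sqrt2\,2^{-n^m/2}$. Your route buys a sharper closeness bound ($2^{n-n^m}$ rather than $2^{-n^m/2}$) and sidesteps the paper's lattice-point estimate, which is stated somewhat loosely (a lower bound on the number of points of $\alpha(I_{n^m})$ in $W$ is what is needed, though it is written in $O$-notation); in exchange, the paper's density argument is less tied to the specific arithmetic of the half-shift. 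Your treatment of $c_2$ via $\sqrt3\notin\QQ(\sqrt2)$ also differs from the paper's one-line finiteness observation, while your count for intervals of length $2^{-n}$ is the same measure argument as the paper's. The boundary bookkeeping you defer is indeed benign: for instance, if $t_*$ lies within $2^{-n^m}$ of $1$ (the only way rounding up could leave the grid), then $\alpha(s)$ is within $\sqrt2\,2^{-n^m}$ of $\tfrac12$, hence $c_1$ is within $\sqrt2\,2^{n-n^m}$ of $1$ and you are already in your round-down case; and the degenerate case $\alpha(s)=\tfrac12$ (where both $\alpha(s)\pm\tfrac12$ lie in $[0,1]$) costs nothing since either choice works.
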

\begin{proof}
Point (1) is an immediate consequence of the fact that $I_{n^m}$ is finite and totally disconnected.

Now we show (2): if $s' \in I_{n^m}$ is such that $E^n_2(\alpha(s')) =c_3$, then $|s-s'|\geq 1/3$.
Recalling \cref{lem:2nToOne}, $E_2^n$ is $2^n$-to-1, carrying each interval $[\frac{p}{2^n}, \frac{p+1}{2^n})$ onto $[0,1)$, $p=0,\dots,2^n-1$. Each such interval contains $2^{n^m-n}$ points of $I_{n^m}$ and $O(2^{n^m-n})$ points of the set $\alpha(I_{n^m})$. Since $\alpha$ is piecewise linear, the $O(2^{n^m-n})$ points are linearly distributed through all but one of the intervals $[\frac{p}{2^n}, \frac{p+1}{2^n})$.
It is assumed that $|c_1-c_3|\leq O(2^{-n^m/2})$. So, let $V$ be the interval $[c_1-\epsilon, c_1+\epsilon]$ for $\epsilon = O(2^{-n^m/2})$. But within any interval $[\frac{p}{2^n}, \frac{p+1}{2^n})$ there is a sub-interval $W$ of length $2\epsilon/2^n$ so that $E_2^n(W) = V$; that is, put $W = (E^n_2)^{-1}(V) \cap [\frac{p}{2^n}, \frac{p+1}{2^n})$ for some $p$.

There are then $O(2^{n^m-n})\cdot O(2^{-(n^m/2)-n+1}) = O(2^{1-2n + n^m/2})$ points in the set $\alpha(I_{n^m})$ contained in $W$. But $m\geq 2$ so this quantity is clearly asymptotically positive ($2^{1-2n + n^m/2}>0$ whenever $n\geq 4$). 
As such, there is always a $p \in [2^n-1]$ so that if $s \in [\frac{q}{2^n}, \frac{q+1}{2^n})$ for some $q\in [2^n-1]$, then $|s-p/2^n|\geq 1/3$. Then, we can take an $s' \in I_{n^m}\cap [\frac{p}{2^n}, \frac{p+1}{2^n})$ so that $s'\in \alpha(I_{n^m}) \cap W= (E^n_2)^{-1}(V) \cap [\frac{p}{2^n}, \frac{p+1}{2^n})$ where $V$ is as before. Finally, \cref{lem:smallPreimage} tells us that this $s'$ is unique---there is no other $s' \in I_{n^m}$ closer to $s$ satisfying the required property.

It is left to show that for every interval $J = [a, b] \subset I$ with $b-a = 1/2^n$, there are $O(2^{n^m-n})$ points $s \in I_{n^m}$ such that $E_2^n(\alpha(s)) \in J$.
Notice for $y \in I$,
\[
E_2^n(y) \in J \iff y \in U \coloneqq \bigcup_{k=0}^{2^n-1}\left(\frac{k}{2^n} + \frac{J}{2^n}\right),
\]
where $U$ is simply a union of $2^n$ disjoint intervals of length $|J|/2^n = 2^{-2n}$, so $|U|=2^n\cdot2^{-2n}=2^{-n}$. Put $A = \alpha^{-1}(U)$, a union of intervals with total length $|A|\leq \sqrt{2}|U| = \sqrt{2}\cdot2^{-n}$ which follows from the fact that
\[
\frac{1}{\sqrt{2}}|x-y| \leq |\alpha(x) - \alpha(y)|\leq \sqrt{2}|x-y|.
\]
So, the number of points of $I_{n^m}$ in $A$ is at most
\[
\frac{|A|}{2^{-n^m}} + 2 \leq \sqrt{2}\cdot 2^{n^m-n} + 2 = O\left(2^{n^m-n}\right).
\]
\end{proof}

In plain words, \cref{lem:telicPerturb} asserts that small variations in an instance of a telic problem, i.e.\ small perturbations in the location of the target point or interval, result in large deviations of the solution. This is stated explicitly in point (2) of the lemma: target points $c_1$ and $c_2$ are very close (within a factor of $2^{-n^m/2}$), but their solutions must be a distance at least $1/3$ apart.
In particular, this forces polynomials interpolating the solutions for different target points to have high degree.
An immediate consequence is that this forces any arithmetic circuits (with only $+,\times$ gates) solving $\ccRSTELIC_{(I, E_2)}(\alpha)$ to have depth $\Omega(n^2)$ (setting $m=2$ when applying \cref{lem:telicPerturb}).

Now we prove \cref{thmNoNaturalSearch}.

\begin{proof}[Proof of \cref{thmNoNaturalSearch}]
We first show there are dynamical systems $(I, F)$ and $(I, T)$ computable by BSS-machines for which $\ccRSTELIC_{(I, F)}(\Id)$ is not reducible to $\ccRSTELIC_{(I, T)}(\Id)$ by level 1 or 2 natural search reductions. We begin with the level 1 case. Let $F(x) = 2x\mod 1$ be the expanding map and put $T(x) = x^2$ a homeomorphism of the interval; these are both computable by BSS-machines.

Suppose for the sake of contradiction that $\ccRSTELIC_{(I, F)}(\Id)$ is reducible to $\ccRSTELIC_{(I, T)}(\Id)$ by level 1 natural search reductions.
Then there is a family of functions $\{\eta_n\}$ with $\eta_n:I \rightarrow I$ such that for all closed intervals $J \in \cI$ and all $x \in I_{n^2}$, $F^n(x) \in J \iff T^n(x) \in \eta_n(J)$. Letting $J$ be a point, we have $F^n(x) = J$ if and only if $T^n(x) = \eta_n(J)$ implies $\eta_n(F^n(x)) = T^n(x)$ for all $x \in I_{n^2}$. But notice $T$ is a bijection while $F$ is 2-1, and in particular $F$ maps dyadic rationals opposite $1/2$ to the same point: thus there are distinct points $x, y \in I_{n^2}$ for which $F(x)=F(y)$ while $T(x) \neq T(y)$, but this provides our contradiction since $\eta_n(F(x)) = \eta_n(F(y))$.

\emph{Level 2 natural search reductions.} Leaving $F$ and $T$ as the expanding map and contraction as above, we show that $\ccRSTELIC_{(I, F)}(\Id)$ is not reducible to $\ccRSTELIC_{(I, T)}(\Id)$ by level 2 natural search reductions. Again, proceed by contradiction, and suppose there exists a family of maps $\{\eta_n\}$ with $\eta_n:\cI\rightarrow \cI$ such that for all $x \in I$ and $J \in \cI$, $F^n(x) \in J$ if and only if $T^n(x) \in \eta_n(J)$. Since we quantify $\forall x\in I$, the assumption implies the pullback identity $(F^n)^{-1}(J) = (T^n)^{-1}(\eta_n(J))$. But this equality cannot hold since $F$ is 2-1, and, in particular for any $J \in \cI$, $(F^n)^{-1}(J)$ is disconnected, while $(T^n)^{-1}(\eta_n(J))$ is connected since $\eta_n:\cI\rightarrow \cI$ and $T$ is a homeomorphism of $I$. Hence the sets $(F^n)^{-1}(J)$ and $(T^n)^{-1}(\eta_n(J))$ cannot be equal for any $J \in \cI$ and all sufficiently large $n$.

\emph{Level 3 natural search reductions.} It is left to show existence of dynamical systems admitting algebraic search telic problems for which one is not reducible to the other by level 3 natural search reductions. This claim follows for reasons less elementary than the previous cases, and relies on \cref{lem:telicPerturb}.
Continue to identify $(I, F)$ with the expanding map, and now let $(I, T)$ be the trivial dynamical system with $T = \Id$. We aim to show there does not exist a level 3 natural search reduction from $\ccRSTELIC_{(I, F)}(\alpha)$ to $\ccRSTELIC_{(I, T)}(\Id)$, where $\alpha$ is the function defined previously.

Assume for the sake of contradiction that there exists a function family $\{\eta_n\}$ with $\eta_n:\cI\rightarrow \cI$ such that for every $x \in I_{n^2}$ and for all $J \in \cI$, $F^n(\alpha(x)) \in J$ if and only if $T^n(x) \in \eta_n(J)$. Consider the class of instances of $\ccRSTELIC_{(I, F)}(\alpha)$ for which $J$ is an interval of length $1/2^n$. By point (2) and the later assertion of \cref{lem:telicPerturb}, there are $O(2^{n^2-n})$ possible solutions and there is at least one possible solution in each interval $[\frac{p}{2^n}, \frac{p+1}{2^n})$, $p=0,1,\dots,2^n-1$. But recall level 3 natural search reductions require that $F^n(\alpha(x)) \in J$ implies $T^n(x) \in \eta_n(J)$ for every $x \in I_{n^2}$. By \cref{lem:telicPerturb}, this forces us to conclude $|\eta_n(J)|\geq 1/3$, because there is a solution in every interval $[\frac{p}{2^n}, \frac{p+1}{2^n})$, all of which $\eta_n(J)$ must contain. But this implies there are points contained in $\eta_n(J) \cap I_{n^2}$ that are not contained in $(F^n\circ \alpha)^{-1}(J) \cap I_{n^2}$. This provides a contradiction because this is not a level 3 natural search reduction: a solution to an instance of $\ccRSTELIC_{(I, T)}(\Id)$ does not imply a solution to the corresponding instance of $\ccRSTELIC_{(I, F)}(\alpha)$.

We conclude by noticing that the expanding map is a ``chaotic" dynamical system---it has positive topological entropy in particular, while the contraction map $T(x)=x^2$ and the trivial system $T = \Id$ are highly regular---they have zero topological entropy. In particular, it is not hard to see that for such regular systems $\ccRSTELIC_{(I, T)}(\hat{g}) \in \ccP_\RR$---see e.g.\ \cref{propRotationInP} for proof.
\end{proof}

\begin{remark}
The proofs of \cref{thmRulingOutNaturalTelic,thmNoNaturalSearch} were rather elementary, and did not use any sophisticated results from dynamical systems. They demonstrated the elementary theory and structure of standard dynamical systems of the unit interval suffices for ruling out entire classes of nontrivial reductions between telic problems that would \emph{a priori} seem hard to exclude. The upshot is that there is a wealth of far more sophisticated dynamical structure available for use in ruling out more complex classes of reductions, such as certain level 4 natural search reductions. In order to manage the scope of this work we leave the further development of this line of inquiry to future work.
\end{remark}

\subsection{Placing structure on level 4 natural search reductions}\label{secLevel4}

For $a \geq 0$, identify $\cI_a$ with the line segment
\[
\{(c, a) \in \RR^2 : a \leq c \leq 1-a\} \subset \RR^2,
\]
so that each element $(c, a) \in \cI_a$ describes a closed interval for which $c$ is the midpoint and $a$ the radius: i.e.\ $[c-a, c+a]$.
We abuse notation in the same way as before, treating $\eta_n:\cI_a\rightarrow \cI_a$ as a function of the set $\cI_a \subset \RR^2$. For $J=(c,a) \in \cI_a$ we write $x \in J$ to mean $x \in [c-a,c+a]$.

For the purpose of this section we shall consider a generalization of algebraic search telic problems, called \emph{bounded algebraic search telic problems}.
As before, let $g:I \rightarrow I$ be a homeomorphism of $I$ computable by a BSS-machine.
Let $\cU \coloneqq \cup_{n\geq 1}\{\One_n\}$, and $\ccK:\cU\rightarrow \cU$ be a BSS-machine halting in polynomial time. If $\ccK(\One_n)= \One_l$, define $k:\NN\rightarrow \NN$ to be defined so that $k(n) = l$.
Let $\ccA:\cU\rightarrow \RR^+$ denote a polynomial-time BSS-machine returning a positive real value $r$ on input $\One_n$. Put $a(n)\coloneqq \ccA(\textbf{1}_n) \in \RR^+$.
Recall $\cB\subset \RR^\infty$ is the set $\cB = \cup_{n=1}^\infty \cB^{(n)}$ where 
\[
\cB^{(n)} = [0,1]\times [0,1] \times \prod_{j=1}^n \{1\}.
\]
Define
\[
\cA^{(n)}_{a(n)} = \left\{(c, r, 1,\dots,1) \in \cB^{(n)} : (c, r) \in \cI_{a(n)} \text{ and } \exists s \in I_{n^2} \text{ s.t. } F^{k(n)}(g(s)) \in [c-r, c+r]\right\},
\]
and put $\cA_{a} = \cup_{n=1}^\infty \cA^{(n)}_{a(n)}$.
This gives us a structured decision problem
\[
\ccRBTELIC_{(I, F)}(a,k,g)\coloneqq (\cB, \cA_a)
\]
we call a \emph{bounded algebraic telic problem}. It is easy to check that these are contained in $\ccNP_\RR$ as in \cref{lemRealTelicNP}.
Write $\ccRSBTELIC_{(I, F)}(a,k,g)\subset \RR^\infty \times \RR^\infty$ to denote the search version of the bounded algebraic telic problem $\ccRBTELIC_{(I, F)}(a,k,g)$.

Natural search reductions of any level carry over to the setting of bounded algebraic telic problems in the natural way with the only alteration being that the maps $\eta_n$ are defined over the domain $\cI_{a(n)}$ rather than $\cI$; that is $\eta_n:\cI_{a(n)}\rightarrow \cI_{a(n)}$.

We use a fixed point argument to obtain the following theorem forcing rather strong irregularity on reductions between certain bounded algebraic search telic problems.

\begin{theorem}\label{thm:fixedPoint}
There exists a BSS-computable dynamical system $(I, F)$, and functions $a(n), k(n)$ for which $\ccRSBTELIC_{(I, F)}(a, k,\Id)$ is not reducible to the trivial search problem  $\ccRSBTELIC_{(I, \Id)}(a, \Id,\Id)$ coming from $(I, \Id)$ by level 4 natural search reductions whose functions $\eta_n:\cI_{a(n)}\rightarrow \cI_{a(n)}$ composing the sequence $\{\eta_n\}_{n=1}^\infty$ satisfy any of the following properties:
\begin{enumerate}
\item continuity,
\item non-expansivity,
\item there is a lower semi-continuous function $\varphi:\RR^2\rightarrow [0,\infty)$ such that
\[
\rho(z, \eta_n(z))\leq \varphi(z)-\varphi(\eta_n(z))
\]
for all $z \in \cI_{a(n)}$.
\end{enumerate}
\end{theorem}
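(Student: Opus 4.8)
The core idea is that the trivial target system $(I,\Id)$ forces the reduction maps $\eta_n$ to behave like the identity at instances that genuinely have solutions, while the source system $(I,F)$ can be engineered so that the set of "solvable" instances at iteration $k(n)$ is a very thin, totally disconnected set. The contradiction then comes from each of the three regularity hypotheses: each property (continuity, non-expansivity, the Lyapunov-type inequality) gives, via a fixed point or approximation argument, that $\eta_n$ cannot move a solvable instance far away, yet correctness of a level 4 reduction against the trivial target \emph{forces} $\eta_n$ to move certain solvable instances to non-solvable locations (or vice versa). Concretely, for $(I,\Id)$ we have $\Id^{k(n)}(\hat g(x)) = x$, so $\ccRSBTELIC_{(I,\Id)}(a,\Id,\Id)$ is solved at instance $(c,a(n))$ iff $I_{n^2}\cap[c-a(n),c+a(n)]\neq\emptyset$, which (choosing $a(n)$ slightly larger than $2^{-n^2}$) holds for \emph{every} $c$; so the target problem is trivially solvable everywhere, and a correct level 4 reduction must have $\eta_n(J)$ landing on an instance whose returned solution $z\in I_{n^2}$ also satisfies $F^{k(n)}(z)\in J$.

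\textbf{Step 1: construct the source system.} I would take $F$ to be (a BSS-computable rescaling of) the expanding map, or more precisely choose $F$ and $k(n)=n^2$ so that, by the analysis underlying \cref{lem:telicPerturb} and \cref{lem:smallPreimage}, the solution set for a target interval of radius $a(n)$ is extremely sparse and "spread out": there exist instances $c_1,c_2$ with $\rho\big((c_1,a(n)),(c_2,a(n))\big)=|c_1-c_2|$ tiny (of order $2^{-n^2/2}$) such that $(c_1,a(n))$ is solvable but $(c_2,a(n))$ is \emph{not} solvable (this is exactly point (1) of \cref{lem:telicPerturb}), and moreover such that any solution to a nearby solvable instance is at distance $\ge 1/3$ in the solution coordinate (point (2)). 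Pick $a(n)$ small enough that solvability of a source instance is equivalent to $F^{k(n)}(\Id(s))=F^{k(n)}(s)$ landing in a radius-$a(n)$ window, i.e.\ to membership of $c$ in the finite set $F^{k(n)}(I_{n^2})$ fattened by $a(n)$ — a finite union of tiny intervals whose complement in $[a(n),1-a(n)]$ is large.

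\textbf{Step 2: derive the fixed-point constraint.} Since the target is solvable everywhere, a correct level 4 reduction must, for \emph{every} source instance $z=(c,a(n))\in\cI_{a(n)}$ that is \emph{not} solvable in the source, have $\eta_n(z)$ also pointing at a place where the returned $I_{n^2}$-solution fails to solve $z$ — but the target always returns \emph{some} $s\in I_{n^2}$ lying in $[\,(\eta_n z)_c - a(n),\,(\eta_n z)_c + a(n)\,]$, and the level 4 condition "$T^n(\hat g(x))\in\eta_n(J)\Rightarrow F^n(g(x))\in J$" forces that $s$ to also satisfy $F^{k(n)}(s)\in[c-a(n),c+a(n)]$, contradicting non-solvability of $z$ unless $\eta_n(z)$ points at an instance whose \emph{every} $I_{n^2}$-representative avoids solving $z$. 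Unwinding this, $\eta_n$ must map the (large, dense) non-solvable region into a carefully restricted target region, while mapping solvable instances in a way compatible with their sparse, far-apart solutions. The upshot I want is: $\eta_n$ must have a fixed point (by continuity / non-expansivity / the Lyapunov condition — e.g.\ Brouwer on the segment $\cI_{a(n)}$ for (1), the Edelstein / contraction-mapping style argument for (2), and the descent inequality forcing $\rho(z,\eta_n z)\to 0$ along iterates hence a fixed point for (3)), and that this fixed point must lie in the solvable set; but continuity/non-expansivity then pins $\eta_n$ near the identity on a neighborhood, which collides with the fact that solvable instances are isolated among a sea of non-solvable ones and so $\eta_n$ \emph{cannot} be near the identity there without violating correctness at the adjacent non-solvable instances.

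\textbf{Step 3: three cases.} For (1), compactness of the segment $\cI_{a(n)}$ plus continuity of $\eta_n$ gives a fixed point $z^\ast$ by the intermediate value theorem applied to $c\mapsto (\eta_n z)_c - c$; correctness then forces $z^\ast$ solvable, and continuity forces $\eta_n$ to preserve solvability on a neighborhood of $z^\ast$, contradicting the existence (Step 1) of a non-solvable instance within distance $2^{-n^2/2}$ of every solvable one for $n$ large. For (2), non-expansivity and the explicit pair $c_1$ (solvable), $c_2$ (non-solvable) with $|c_1-c_2|$ tiny force $|(\eta_n z_1)_c-(\eta_n z_2)_c|$ tiny as well, so $\eta_n z_1$ and $\eta_n z_2$ point at target instances whose returned solutions can be taken within $2^{-n^2/2}$ of each other; but a solution for $z_1$ and (the impossibility of) a solution for $z_2$ are incompatible once the returned points are this close, since solvability windows have radius $a(n)\gg 2^{-n^2/2}$ is \emph{false} — here I must instead use that $z_2$ non-solvable means \emph{every} $I_{n^2}$-point is $\ge$ some fixed gap from $F^{k(n)}$-hitting $[c_2-a(n),c_2+a(n)]$, and non-expansivity transports this gap, a contradiction. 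For (3), iterating $\rho(\eta_n^{j}z,\eta_n^{j+1}z)\le\varphi(\eta_n^{j}z)-\varphi(\eta_n^{j+1}z)$ gives a summable (telescoping) total displacement, so $\{\eta_n^j z\}$ is Cauchy and converges to a fixed point $z^\ast$ with, by lower semicontinuity of $\varphi$, $\rho(z^\ast,\eta_n z^\ast)=0$; then argue as in cases (1)–(2).

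\textbf{Main obstacle.} The hard part is Step 1 combined with the exact bookkeeping in Step 2: I must choose $F$, $k$, and especially $a(n)$ so that simultaneously (a) the source solvable set is \emph{sparse} (finite union of tiny intervals) so that "non-solvable instances dense around every solvable one" holds, yet (b) the level 4 correctness condition against the everywhere-solvable trivial target genuinely forces $\eta_n$ to respect solvability in a way incompatible with (a) under each regularity hypothesis. Balancing $a(n)$ against the $2^{-n^2/2}$ perturbation scale from \cref{lem:telicPerturb} and against the $1/3$ solution-separation, while keeping $a(n)$ BSS-computable in polynomial time and keeping the target problem in $\ccP_\RR$, is the delicate point; I expect to need $a(n)$ on the order of $2^{-n^2}$ up to a constant, and to invoke \cref{lem:smallPreimage} to guarantee the relevant preimage points are distinct so that the "$\ge 1/3$" separation is clean. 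Everything after the correct choice of parameters is a routine application of the appropriate fixed-point / descent principle in each of the three cases.
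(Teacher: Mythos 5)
Your overall template (obtain a fixed point of $\eta_n$ on the compact convex set $\cI_{a(n)}$ via Brouwer, a non-expansive fixed point theorem, and Caristi, then contradict level 4 correctness) is the same mechanism the paper uses, but your choice of source dynamics breaks the decisive step. The paper takes $(I,F)$ to be an irrational rotation and defines $k(n)$ specifically so that, for all large $n$, $F^{k(n)}(J)\cap J=\emptyset$ for \emph{every} $J\in\cI_{a(n)}$, while $a(n)=2^{-n^2}$ guarantees every such $J$ contains a point of $I_{n^2}$. Then \emph{any} fixed point $J=\eta_n(J)$ is fatal: some dyadic $x\in\eta_n(J)=J$ exists, and the level 4 condition forces $F^{k(n)}(x)\in J$, contradicting $F^{k(n)}(J)\cap J=\emptyset$. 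In your construction (expanding map, $k(n)=n^2$, sparse solvable set) the contradiction only arises when the fixed point lands in the non-solvable region: with $E_2$ and $k(n)=n^2$, every $s\in I_{n^2}$ maps to $0$, so a fixed interval containing $0$ produces no contradiction at all. None of the three fixed point theorems lets you control \emph{where} the fixed point lies, and your attempted patches (continuity ``preserves solvability on a neighborhood'', a transported gap under non-expansivity) are not arguments: continuity of $\eta_n$ says nothing about how solvability of source instances varies nearby, and you yourself flag that your non-expansive case as written fails. The missing idea is to engineer $F$, $a$, $k$ so that the obstruction holds at every admissible interval simultaneously, which is exactly what the rotation plus the tailored definition of $k(n)$ accomplishes.

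Two further problems. First, with your parameters ($a(n)$ slightly larger than $2^{-n^2}$, so every interval of $\cI_{a(n)}$ contains a precision-$n^2$ dyadic) the existence of even one non-solvable source instance already rules out \emph{every} level 4 reduction: for such $J$, any value $\eta_n(J)$ contains some $x\in I_{n^2}$, and the level 4 condition forces $F^{k(n)}(x)\in J$, a contradiction requiring no regularity hypothesis. So either your setup proves the statement only in this degenerate way (and the scenario in your Step 2 where $\eta_n(z)$ ``points at an instance whose every $I_{n^2}$-representative avoids solving $z$'' is vacuous, since no such instance exists), or you must shrink $a(n)$, after which your sparsity and density bookkeeping has to be redone from scratch; the paper deliberately avoids this degeneracy by making every source instance solvable, so that the regularity hypotheses genuinely carry the proof. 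Second, the theorem fixes the homeomorphism to be $\Id$, so you cannot lean on \cref{lem:telicPerturb} and \cref{lem:smallPreimage}, whose statements concern $E_2^n\circ\alpha$; folding $\alpha$ into $F$ does not work, because iterating the composition is not $E_2^n\circ\alpha$.
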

\begin{proof}
Define $(I, F)$ to be a BSS-computable rigid circle rotation by any irrational value $\kappa \in \RR^+$, so $F(x) = x + \kappa \mod 1$.
Let $\ccRSBTELIC_{(I, F)}(a,k,\Id)\subset \RR^\infty \times \RR^\infty$ be the search version of the bounded algebraic telic problem $\ccRBTELIC_{(I, F)}(a,k,\Id)$, with $a(n)\coloneqq \ccA(\textbf{1}_n)$ defined to be $a(n) =1/2^{n^2}$---this implies the intervals in $\cI_{a(n)}$ have length $1/2^{n^2}$.
Define $\ccK(\textbf{1}_n)$, treated as a positive integer $k(n)$, to be
\[
k(n) = \begin{cases}
n &\text{ when } \rho(F^{k(n)}(0), 0) > \frac{1}{2^{n^2}}\\
n+1 &\text{ otherwise.}
\end{cases}
\]
Because $F$ is a rigid rotation by fixed $\kappa>0$, there exists an $N \in \NN$ such that for all $n \geq N$, if
\[
\rho(F^{n}(0), 0) \leq \frac{1}{2^{n^2}}, \;\text{ then }\; \rho(F^{n+1}(0), 0) > \frac{1}{2^{n^2}}.
\]
In addition, the machines $\ccA$ and $\ccK$ computing $a(n)$ and $k(n)$ both clearly run in polynomial-time.

For any value $0\leq a\leq 1/2$, $\cI_a$ is compact, nonempty, and convex.
Hence, if $\eta_n:\cI_{a(n)}\rightarrow \cI_{a(n)}$ is either (1) continuous, (2) non-expansive, or (3) there is a lower semi-continuous function $\varphi:\RR^2\rightarrow [0,\infty)$ such that
\[
\rho(z, \eta_n(z))\leq \varphi(z)-\varphi(\eta_n(z)) \text{ for all $z \in \cI_{a(n)}$},
\]
then it follows by (1) Brouwer's Fixed Point Theorem, (2) Browder-Kirk's Fixed Point Theorem, and (3) Caristi's Fixed Point Theorem, that there is a $\bar{x} \in \cI_{a(n)}$ such that $\eta_n(\bar{x}) = \bar{x}$ (see \cite{pata2019fixed} for a nice source on fixed point theorems).

However, the definitions of functions $a(n), k(n)$ prevent any of the $\eta_n$ from having a fixed point for sufficiently large $n$.
To see this, first notice that by the definition that $a(n) = 1/2^{n^2}$, the length of any interval in $\cI_{a(n)}$ is $1/2^{n^2+1}$, and hence any such intervals contains at least one element of the $n^2$-discretization of $I$.
Thus there is always a solution.
In a level 4 natural search reduction from $\ccRSBTELIC_{(I, F)}(a, k,\Id)$ to $\ccRSBTELIC_{(I, \Id)}(a, \Id,\Id)$, $\eta_n$ must satisfy, for every $n\in \NN$, $x \in I_{n^2}$ and $J \in \cI_{a(n)}$, the property that $x \in \eta_n(J)$ implies $F^{(k(n)}(x) \in J$.

Consequently, for sufficiently large $n$, if $\eta_n:\cI_{a(n)}\rightarrow \cI_{a(n)}$ has a fixed point, there is a $J \in \cI_{a(n)}$ such that $\eta_n(J) = J$, and in the level 4 natural search reduction we have $x \in \eta_n(J)=J$ implies $F^{k(n)}(x) \in J$.
But by definition of $k(n)$, for all sufficiently large $n$ we have $F^{k(n)}(J) \cap J = \emptyset$, so $x \in \eta_n(J)=J$ does not imply $F^{k(n)}(x) \in J$. We conclude that function families $\{\eta_n\}$ for which the $\eta_n$ have a fixed point fail to serve as level 4 natural search reductions.
\end{proof}

By condition (1) of \cref{thm:fixedPoint}, $\eta_n$ cannot be continuous. This implies the $\eta_n$ cannot be computed by arithmetic circuits, for instance, which use only $+$ and $\times$ gates. In fact the lack of continuity is a rather strong condition, and implies the functions $\eta_n$ must already possess non-trivial behavior. Indeed, this can be seen through applying the Lefschetz Fixed Point Theorem, and noticing that if an $\eta_n$ cannot have a fixed point, it must annihilate certain homology groups, which implies severe singularities or topological folding. We leave further commentary to this end to future work, although we will remark that this line of attack appears to be vulnerable to proving lower-bounds on the complexity of the $\eta_n$ through the application of homological and homotopy theoretic machinery.

\begin{remark}
The example of the rigid circle rotation used to prove \cref{thm:fixedPoint} is quite simple and also rather contrived, so the proof works for basic reasons. However, because the example considered in the proof is already so simple, the upshot is that this result readily carries over to a host of other cases with far less trivial properties. In particular, one can expect that most level 4 natural search reductions between highly distinct systems cannot satisfy any of the properties (1)--(3) of \cref{thm:fixedPoint}. One way to force such behavior easily for systems more complex then rigid rotations, is to generalize the $\ccK$ so that it is now a function of both $n$ and the target interval $J$, defined so that $k(n)\neq n$ when the map $F:I\rightarrow I$ would carry a point of $I_{n^2}$ contained in $J$ back into $J$ after $n$ iterations.
\end{remark}

\subsection{A concluding comment on Level 4a reductions}

In constructing a sequence of functions $\{\eta_n\}_{n=1}^\infty$ for reducing any search telic problem $\ccRSTELIC_{(I, F)}(g)$ to any other $\ccRSTELIC_{(I, T)}(\hat{g})$ by level 4 natural search reductions, the natural approach would be to find a sequence of functions that ``extends" nicely to the rest of state space, in that instead of  $\forall x \in I_{n^2}$, we take $\forall x \in I$: we say $\ccRSTELIC_{(I, F)}(g)$ is reducible to $\ccRSTELIC_{(I, T)}(\hat{g})$ by \emph{level 4a natural search reductions} if there is a family of functions $\{\eta_n\}_{n=1}^\infty$ with $\eta_n:\cI\rightarrow \cI$, such that for all $J \in \cI$ and for all $x \in I$, $T^n(\hat{g}(x)) \in \eta_n(J)$ implies $F^n(g(x)) \in J$, and $T^n(\hat{g}(I_{n^2})) \cap \eta_n(J) = \emptyset$ implies $F^n(g(I_{n^2})) \cap J = \emptyset$ (the later condition set for the case when there is no solution).

Studying level 4a natural search reductions is a natural first step in attempting to understand the structure of level 4 natural search reductions.
As a consequence, we collect some basic properties forced upon function sequences composing level 4a natural search reductions, which may serve as useful observations in further development.

Let $(I, \cS, \lambda, F)$ and $(I, \cS, \lambda, T)$ be measure preserving  systems of the unit interval, both preserving the Lebesgue measure $\lambda$, and both computable by BSS-machines.
Suppose there exists a level 4a natural search reduction from $\ccRSTELIC_{(I, F)}(\Id)$ to $\ccRSTELIC_{(I, T)}(\Id)$.
Then the hypothesis $T^n(x) \in \eta_n(J) \implies F^n(x) \in J$ for every $x \in I$ and $J \in \cI$ gives us the identify $(T^n)^{-1}(\eta_n(J)) \subseteq (F^n)^{-1}(J)$. We immediately obtain some measure-theoretic consequences, valid for every $n$. First,
\[
\lambda(\eta_n(J)) = \lambda((T^n)^{-1}(\eta_n(J))) \leq \lambda ((F^n)^{-1}(J)) = \lambda(J).
\]
Thus, if $J \in \cI$ is a point, $\lambda(\eta_n(J)) = 0$, and hence $\eta_n(J)$ is also a point because $\eta_n$ maps intervals to intervals, and the only measure zero intervals are points (degenerate intervals). Furthermore, because $\lambda$ measures lengths of intervals on $I$, we have $|\eta_n(J)|\leq |J|$ for all $J \in \cI$.
We also see that disjointness is preserved, modulo null sets: if $J_1, J_2 \in \cI$ are disjoint, $J_1 \cap J_2 = \emptyset$, then $(F^n)^{-1}(J_1) \cap (F^n)^{-1}(J_2) = \emptyset$. This gives us
\[
(T^n)^{-1}(\eta_n(J_1)) \cap (T^n)^{-1}(\eta_n(J_2)) = \emptyset.
\]
If $T^n$ is onto mod measure-zero sets, we conclude $\lambda(\eta_n(J_1) \cap \eta_n(J_2)) = 0$; images of disjoint intervals are (Lebesgue) disjoint.
In particular, if $\{J_k\}$ is a finite family of disjoint elements of $\cI$, then $\sum_k\lambda(\eta_n(J_k)) \leq \sum_k \lambda(J_k)$:

\begin{align*}
\sum_k\lambda(\eta_n(J_k)) = \sum_k \lambda((T^n)^{-1}(\eta_n(J_k))) &= \lambda \left( \bigsqcup_k(T^n)^{-1}(\eta_n(J_k))\right)\\
&\leq \lambda\left(\bigsqcup_k(F^n)^{-1}(J_k)\right) = \sum_k \lambda(J_k).
\end{align*}

One useful result these comments provide is the following

\begin{proposition}\label{propLevel4aRegularity}
Let $(I, F)$ be any BSS-computable dynamical system for which $F$ is many-to-one, and let $(I, T)$ be a BSS-computable dynamical system with $T$ a homeomorphism of $I$. Suppose $\ccRSTELIC_{(I, F)}(g)$ is a algebraic search telic problem reducible to $\ccRSTELIC_{(I, T)}(\hat{g})$ by level 4a natural search reductions. Then if $\{\eta_n\}_{n=1}^\infty$ is the corresponding function sequence, every $\eta_n$ carries points (degenerate closed intervals) in $\cI$ to points in $\cI$; that is, $\eta_n:I\rightarrow I$.
\end{proposition}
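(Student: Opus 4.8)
The plan is to show directly that, for every $n$ and every $y\in I$, the element $\eta_n(\{y\})\in\cI$ is a single point. Apply the defining property of a level 4a natural search reduction with the degenerate interval $J=\{y\}$: for every $x\in I$, $T^n(\hat g(x))\in\eta_n(\{y\})$ implies $F^n(g(x))=y$. Rephrasing this as an inclusion of preimages inside $I$,
\[
(T^n\circ\hat g)^{-1}\bigl(\eta_n(\{y\})\bigr)\ \subseteq\ (F^n\circ g)^{-1}(\{y\}).
\]
It is precisely the ``$\forall x\in I$'' clause of level 4a (as opposed to merely ``$\forall x\in I_{n^2}$'') that makes this step work; the ``no-solution'' clause is not needed. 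This is the search analogue of the level 2 dichotomy in the proof of \cref{thmNoNaturalSearch}: a connected left-hand side trapped inside a totally disconnected right-hand side.

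First I would analyze the left-hand side. Since $T$ is a homeomorphism of $I$, so is $T^n$, and $\hat g\in\cG$ is a homeomorphism, so $T^n\circ\hat g$ is a homeomorphism of $[0,1]$ whose inverse $\hat g^{-1}\circ(T^{-1})^n$ is continuous. Hence $(T^n\circ\hat g)^{-1}(\eta_n(\{y\}))$ is the continuous image of the nonempty connected set $\eta_n(\{y\})$, so it is a nonempty connected subset of $I$, i.e.\ a nonempty subinterval; and since $T^n\circ\hat g$ is a bijection, this subinterval is a single point exactly when $\eta_n(\{y\})$ is. For the right-hand side, $F$ being many-to-one in the sense operative here---as for the tent and expanding maps, i.e.\ $F$, and hence every iterate $F^n$, is not constant on any nondegenerate subinterval---forces every point-fiber $(F^n)^{-1}(\{y\})$ to be totally disconnected (indeed finite in the standard examples, cf.\ \cref{lem:2nToOne}); applying the homeomorphism $g^{-1}$ then shows $(F^n\circ g)^{-1}(\{y\})$ is totally disconnected as well.

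The inclusion now places a nonempty connected set inside a totally disconnected set, which forces it to be a single point; by the equivalence noted above, $\eta_n(\{y\})$ is a single point. As $y\in I$ and $n$ were arbitrary, every $\eta_n$ sends points of $\cI$ to points of $\cI$, which is the assertion $\eta_n:I\to I$.

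The one place calling for care is the reading of ``$F$ many-to-one'': what the argument consumes is that the point-fibers of $F^n$ are totally disconnected, equivalently that $F$ (hence every $F^n$) is nowhere locally constant. This holds for all systems of interest here---expanding maps, the tent map, and piecewise strictly monotone non-injective maps generally---so I would adopt it as the working meaning; a transformation that were constant on some nondegenerate subinterval would be a genuine exception. Otherwise the proof is a short topological computation, and the contrast with level 4 is instructive: there one quantifies only over $I_{n^2}$, whose image is automatically totally disconnected, so the analogous inclusion carries no content.
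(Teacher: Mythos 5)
Your argument is correct and is essentially the paper's own proof: take $J$ to be a degenerate interval, use the level 4a implication to obtain $(T^n\circ\hat g)^{-1}(\eta_n(\{y\}))\subseteq (F^n\circ g)^{-1}(\{y\})$, and exploit that $T^n\circ\hat g$ is a homeomorphism (so the left side is a nonempty interval) while the point-fibers of $F^n\circ g$ contain no nondegenerate interval, forcing $\eta_n(\{y\})$ to be a point. Your caveat that ``many-to-one'' must be read as ``nowhere locally constant'' is a fair sharpening of what the paper compresses into the phrase ``using the many-to-one assumption on $F$,'' and matches the intended examples.
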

\begin{proof}
Recall the hypothesis $T^n(x) \in \eta_n(J) \implies F^n(x) \in J$ for every $x \in I$ and $J \in \cI$ gives us the identify $(T^n)^{-1}(\eta_n(J)) \subseteq (F^n)^{-1}(J)$.
Take $J = y \in \cI$ to be a point.
The functions $T, \hat{g}$ are taken to be homeomorphisms, which implies $(T^n\circ \hat{g})^{-1}(\eta_n(y))$ is a point contained in the preimage $(F^n \circ g)^{-1}(y)$ using the many-to-one assumption on $F$.
Then $(T^n\circ \hat{g})^{-1}(\eta_n(y)) \in (F^n \circ g)^{-1}(y)$. In particular, we conclude that $\eta_n(y)$ carries points in $\cI$ to points in $\cI$, i.e. $\eta_n:I\rightarrow I$ for all $n \in \NN$.
\end{proof}

\cref{propLevel4aRegularity} forces useful regularity on the functions $\eta_n:\cI\rightarrow \cI$. Identify $\cI$ with the closed triangle $\Delta = \{(a, b) \in I^2: 0\leq a\leq b\leq 1\}$, and write $\eta_n(J) = \eta(a, b) = [\gamma_1(a, b), \gamma_2(a, b)] \in \cI$. Then \cref{propLevel4aRegularity} forces $\gamma_1$ and $\gamma_2$ to coincide on the diagonal of $\Delta$, that is $\gamma_1(x, x) = \gamma_2(x,x)$.

\appendix

\section{Ruling out reductions in a general setting}\label{secGeneralReductions}

The proof of \cref{thmRulingOutNaturalTelic} fixed two simple dynamical systems to prove the assertion. Because of their simplicity, the result can be easily extended using identical argument to more complex dynamical systems of the unit interval.
However, one may ask the extent to which reductions can be ruled out between dynamical systems of a more general class. In this section we show that this can be done.

We begin by breifly providing a generalization of algebraic telic problems to arbitrary BSS-computable systems.

\begin{definition}[Generalized algebraic telic problems]\label{defGenRtelic}
Let $X \subset \RR^d$, and $(X, T)$ be a dynamical system computable by a BSS-machine $\ccM$. Let $\cB\subset \RR^\infty$ be the set $\cB = \cup_{n=1}^\infty \cB^{(n)}$ where 
\[
\cB^{(n)} = X \times \RR_{\geq 0} \times \prod_{j=1}^n \{1\}.
\]
Let $g:X \rightarrow X$ be a homeomorphism of $X$ computable by a BSS-machine.
Let
\[
\cA^{(n)} = \left\{(w, \delta, 1,\dots,1) \in \cB^{(n)} : \exists s \in I_{n^2} \text{ s.t. } T^n(g(s)) \in B_\delta(w)\right\},
\]
and put $\cA = \cup_{n=1}^\infty \cA^{(n)}$.
A \emph{generalized algebraic telic problem} coming from a BSS-computable dynamical system $(X, T)$, is the structured decision problem $\ccGRTELIC_{(X, T)}(g) \coloneqq (\cB, \cA)$.
\end{definition}

Let $X \subseteq \RR^d$ and $Y \subseteq \RR^m$, and let $(X,T)$ and $(Y, S)$ be BSS-computable topological dynamical systems admitting generalized algebraic telic problems $\ccGRTELIC_{(X, T)}(g)$ and $\ccGRTELIC_{(Y, S)}(\hat{g})$, respectively.
We shall say $(X, T)$ is reducible to $(Y, S)$ by \emph{natural telic reductions} if there exists a sequence of functions $\{\varphi_i\}_{i=1}^\infty$, $\varphi_i:X\rightarrow Y$, each with the property that for any $x \in X$ and closed ball $B \subset X$, $T^{n}(x) \in B$ implies $S^{n}(\varphi_n(x)) \in \varphi_n(B)$.
If a natural telic reduction exists between the systems $(X, T)$ and $(Y, S)$, and the sequence of functions $\{\varphi_n\}$ is uniformly and polynomial-time computable by BSS-machines, then they can be used in reducing instances of a generalized algebraic telic problem $\ccGRTELIC_{(X, T)}(g)$ coming from $(X, T)$ to a telic problem $\ccGRTELIC_{(Y, S)}(\hat{g})$ coming from $(Y, S)$.

Recall a dynamical system $(X, T)$ is \emph{minimal} if it has no proper, closed $T$-invariant subsets.
We have the following

\begin{theorem}\label{thmImpossibleRed}
Let $X \subset \RR^d$, and $Y\subset \RR^m$ be infinite. Let $(X, T)$ and $(Y, S)$ be BSS-computable dynamical systems. Whenever $(X, T)$ has a periodic point and $(Y, S)$ is minimal, $(X, T)$ is not reducible to $(Y, S)$ by natural telic reductions.
\end{theorem}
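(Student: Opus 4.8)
The plan is to argue by contradiction. From any natural telic reduction I would extract a clean functional equation linking the iterates of $T$ and $S$, and then show that this equation carries a periodic point of $T$ to a periodic point of $S$. Since $(Y,S)$ is minimal and $Y$ is infinite, $S$ can have no periodic point, yielding the contradiction.

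First I would unpack the definition. Suppose $\{\varphi_n\}_{n=1}^\infty$, $\varphi_n:X\to Y$, witnesses that $(X,T)$ reduces to $(Y,S)$ by natural telic reductions, so that for every $n\geq 1$, every $x\in X$, and every closed ball $B\subseteq X$, we have $T^n(x)\in B \Rightarrow S^n(\varphi_n(x))\in\varphi_n(B)$. Because degenerate balls of radius zero are allowed (the radius parameter ranges over $\RR_{\geq 0}$ in \cref{defGenRtelic}, exactly as degenerate intervals were used throughout \cref{thmRulingOutNaturalTelic} and \cref{thmNoNaturalSearch}), I would specialize to $B=\{T^n(x)\}$. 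Then $S^n(\varphi_n(x))\in\varphi_n(\{T^n(x)\})=\{\varphi_n(T^n(x))\}$, which forces the identity
\[
\varphi_n\circ T^n \;=\; S^n\circ\varphi_n \qquad\text{for every } n\geq 1.
\]

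Next I would invoke the hypotheses on the two systems. Let $p\in X$ be a periodic point, say $T^k(p)=p$ for some $k\geq 1$. Applying the displayed identity with $n=k$ and evaluating at $p$ gives $\varphi_k(p)=\varphi_k(T^k(p))=S^k(\varphi_k(p))$, so $q:=\varphi_k(p)\in Y$ satisfies $S^k(q)=q$; that is, $q$ is a periodic point of $S$. Its orbit $O=\{q,S(q),\dots,S^{k-1}(q)\}$ is nonempty and finite, hence closed, and $S(O)=O$, so $O$ is a closed $S$-invariant subset of $Y$. Since $Y$ is infinite and $|O|\leq k<\infty$, $O$ is a proper subset, contradicting the minimality of $(Y,S)$. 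Hence no such sequence $\{\varphi_n\}$ exists, which is the assertion.

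The argument is short, and the only point requiring care — a mild one — is the passage from the one-way implication to the equality $\varphi_n\circ T^n=S^n\circ\varphi_n$ via radius-zero balls; this is consistent with \cref{defGenRtelic} and parallels the repeated use of degenerate targets in the earlier proofs. I would also emphasize what makes the obstruction robust: no continuity, compactness, or computability of the $\varphi_n$ is used anywhere, so even non-effective reductions are excluded, and the only dynamical inputs are the existence of a periodic point on one side and minimality on an infinite space on the other. (The same computation would go through verbatim if the hypothesis on $(X,T)$ were weakened from ``has a periodic point'' to ``has a point with finite forward orbit,'' replacing $k$ by an eventual period.)
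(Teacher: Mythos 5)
Your proof is correct and takes essentially the same route as the paper: degenerate (radius-zero) targets force the iterate identity $\varphi_n\circ T^n = S^n\circ\varphi_n$, a periodic point of $T$ is then carried to a periodic point of $S$, and minimality of the infinite system $(Y,S)$ excludes periodic points, giving the contradiction. If anything, your specialization to $n=k$ (the period of $p$) is marginally cleaner than the paper's appeal to its periodic-point lemma, which is stated for a genuine semi-conjugacy $\varphi\circ T=S\circ\varphi$ rather than the single-iterate identity actually available.
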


Before giving proof of \cref{thmImpossibleRed} we provide some lemmas.
For these, recall $\Per_n(X, T)$ is the subset of $X$ consisting of all periodic points of period $n$, and $\Per(X, T) = \cup_{n\geq 1} \Per_n(X, T)$. We also use the fact that if $\varphi:X \rightarrow Y$ is a function satisfying $\varphi\circ T = S \circ \varphi$, this implies $\varphi\circ T^n = S^n \circ \varphi$ for all $n \in \NN$. This follows from an easy inductive argument, where at the inductive step use $\varphi\circ T^{n+1} = \varphi \circ T \circ T^n$ to get $\varphi \circ T^{n+1} = (S \circ \varphi) \circ T^n = S \circ (\varphi \circ T^n)$, so applying the induction hypothesis that $\varphi \circ T^n = S^n \circ \varphi$, we obtain $S \circ (\varphi \circ T^n) = S \circ (S^n \circ \varphi) = S^{n+1} \circ \varphi$ which proves the assertion.

\begin{lemma}\label{lemSemiConjPer}
Let $(X, T)$ and $(Y,S)$ be dynamical systems, and let $\varphi:X\rightarrow Y$ be a function such that $\varphi\circ T = S \circ \varphi$. Then $\varphi(\Per_n(X, T)) \subseteq \Per_n(Y,S)$ for every $n \geq 1$, and $\varphi(\Per(X, T)) \subseteq \Per(Y, S)$.
\end{lemma}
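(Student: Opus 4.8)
The plan is to reduce everything to the single algebraic identity $\varphi\circ T^{n}=S^{n}\circ\varphi$, which---as recorded in the paragraph immediately preceding the statement---follows from the intertwining hypothesis $\varphi\circ T=S\circ\varphi$ by a routine induction on $n$. Once this identity is in hand, both containments are one-line verifications, and no topological hypotheses on $\varphi$ (continuity, surjectivity) are needed: the argument is purely set-theoretic.

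First I would handle a fixed period $n\geq 1$. Take any $x\in\Per_n(X,T)$, so that $T^{n}(x)=x$. Applying $\varphi$ to both sides and invoking the identity above yields
\[
S^{n}(\varphi(x))=\varphi(T^{n}(x))=\varphi(x),
\]
so $\varphi(x)$ is fixed by $S^{n}$, i.e.\ $\varphi(x)\in\Per_n(Y,S)$. Since $x\in\Per_n(X,T)$ was arbitrary, this gives $\varphi(\Per_n(X,T))\subseteq\Per_n(Y,S)$.

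Next I would deduce the statement for $\Per$ simply by taking unions over $n$. Using $\Per(X,T)=\bigcup_{n\geq 1}\Per_n(X,T)$ and the fact that the image of a union is the union of the images,
\[
\varphi(\Per(X,T))=\bigcup_{n\geq 1}\varphi(\Per_n(X,T))\subseteq\bigcup_{n\geq 1}\Per_n(Y,S)=\Per(Y,S),
\]
which completes the proof.

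There is essentially no mathematical obstacle here; the only point that requires care is fixing the meaning of ``period $n$''. I would read $\Per_n(X,T)=\{x\in X:T^{n}(x)=x\}$, i.e.\ $n$ is \emph{a} period of $x$ rather than necessarily the least one, which is exactly the convention under which the per-$n$ containment holds verbatim. If instead one insists on least period, then $\varphi$ may strictly decrease the least period of a point (for instance when $\varphi$ identifies the points of a periodic orbit), so only the $\Per$ statement survives unchanged, while the $\Per_n$ statement would have to be weakened to the assertion that the least period of $\varphi(x)$ divides $n$. I will therefore state the convention explicitly at the start of the proof.
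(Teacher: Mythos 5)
Your proposal is correct and follows essentially the same route as the paper: iterate the intertwining identity to get $\varphi\circ T^{n}=S^{n}\circ\varphi$, apply it pointwise to conclude $\varphi(\Per_n(X,T))\subseteq\Per_n(Y,S)$, and take unions over $n$ for the statement about $\Per$. Your remark about reading $\Per_n$ as the set of points with $T^{n}(x)=x$ (not least period $n$) matches the convention the paper implicitly uses.
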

\begin{proof}
If $x \in \Per_n(X, T)$ then by the assumption that $\varphi\circ T = S \circ \varphi$, we have $S^n(\varphi(x)) = \varphi(T^n(x)) = \varphi(x)$, so $\varphi(x) \in \Per_n(Y,S)$. Using this, we see that
\begin{align*}
\varphi(\Per(X, T)) &= \varphi(\cup_{n\geq 1} \Per_n(X, T))\\
&= \cup_{n \geq 1}\varphi(\Per_n(X, T)) \\
&\subseteq \cup_{n\geq 1}\Per_n(Y,S)\\
&= \Per(Y,S).
\end{align*}
\end{proof}

\begin{lemma}\label{lemConjEquiv}
Let $(X, T)$, $(Y, S)$ be dynamical systems and suppose there exists a function $\varphi:X\rightarrow Y$ such that for all $x \in X$, and any subset $J \subseteq X$ of state space, $T^n(x) \in J$ implies $S^n(\varphi(x)) \in \varphi(J)$. Then $\varphi\circ T = S \circ \varphi$.
\end{lemma}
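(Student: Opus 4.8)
The plan is to exploit that the hypothesis quantifies over \emph{all} subsets $J\subseteq X$, in particular over singletons. This is exactly the ``let $J$ be a point'' trick already used repeatedly in this paper (e.g.\ in \cref{claimLvl2} and in the level 1 reduction arguments for \cref{thmNoNaturalSearch}), where a containment statement is collapsed to an equality by taking the target set to be degenerate. Here the target set lives in the domain $X$ rather than the codomain, but the mechanism is the same: choosing $J$ to be the singleton $\{T(x)\}$ forces $\varphi(J)=\{\varphi(T(x))\}$, and membership in a one-point set is equality.

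\textbf{Key step.} Fix an arbitrary $x\in X$. Apply the hypothesis with $n=1$ and $J=\{T(x)\}$. Since $T^1(x)=T(x)\in J$ trivially, the hypothesis yields $S(\varphi(x))\in\varphi(J)=\{\varphi(T(x))\}$, hence $S(\varphi(x))=\varphi(T(x))$. As $x$ was arbitrary, $\varphi\circ T=S\circ\varphi$, which is the claim. (One may equally take $J=\{T^n(x)\}$ for each $n$ to obtain $\varphi\circ T^n=S^n\circ\varphi$ directly, but this already follows from $\varphi\circ T=S\circ\varphi$ by the elementary induction recorded just before the lemma, so only the case $n=1$ is needed.)

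\textbf{Main obstacle.} There is essentially no obstacle: the content of the lemma is precisely the observation that permitting $J$ to be a single point degrades the containment $S^n(\varphi(x))\in\varphi(J)$ into the semiconjugacy equation. The only point worth flagging is the reading of the quantifiers — the hypothesis is to be understood as holding for a fixed (arbitrary) $n\geq 1$, and in particular for $n=1$, matching the definition of natural telic reduction in the surrounding text; with that convention the argument above is complete. Conceptually this is the crucial bridge between the ``natural reduction'' templates and topological (semi-)conjugacy that drives \cref{thmImpossibleRed}: a one-sided, set-valued reduction condition, once tested on points, is nothing but the conjugacy relation $\varphi\circ T=S\circ\varphi$.
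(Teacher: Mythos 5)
Your proof is correct and uses essentially the same argument as the paper: the paper also takes $J$ to be a single point, obtaining $\varphi(T^n(x))=S^n(\varphi(x))$, i.e.\ the semiconjugacy relation. Your remark about the quantification over $n$ is a fair observation — the paper's own proof silently derives $\varphi\circ T^n=S^n\circ\varphi$ for whichever $n$ appears in the hypothesis (which is also the form actually invoked in \cref{thmImpossibleRed}), while your specialization to $n=1$ yields the conclusion exactly as stated.
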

\begin{proof}
Take $J = z\in X$. Then $T^n(x) = z$ implies $S^n(\varphi(x)) = \varphi(z)$ so $\varphi(T^n(x)) = S^n(\varphi(x))$, i.e.\ $\varphi\circ T^n = S^n \circ \varphi$.
\end{proof}

\begin{lemma}\label{lemInfiniteMinimal}
Let $(Y, S)$ be a topological dynamical system with $Y \subseteq \RR^d$ infinite. Then if $(Y, S)$ is minimal the system does not have any periodic orbits.
\end{lemma}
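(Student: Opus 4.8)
The plan is to argue by contradiction, exploiting the fact that a periodic orbit is a finite, hence closed, invariant set, which minimality forbids unless it exhausts all of $Y$.

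First I would suppose, for the sake of contradiction, that $(Y,S)$ has a periodic point $p \in Y$, say of period $n \geq 1$, and form its orbit $O = \{p, S(p), \dots, S^{n-1}(p)\} \subseteq Y$. The next step is to record three properties of $O$: it is nonempty (it contains $p$); it is $S$-invariant, since $S(S^{i}(p)) = S^{i+1}(p) \in O$ for $0 \le i \le n-1$, using $S^{n}(p) = p$; and it is closed in $Y$, because $O$ is a finite subset of the metric space $Y \subseteq \RR^d$, and finite subsets of a Hausdorff (indeed metric) space are closed.

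With these properties in hand, I would invoke minimality of $(Y,S)$: by definition $(Y,S)$ has no proper closed $S$-invariant subsets, so the nonempty closed $S$-invariant set $O$ must equal $Y$. But then $Y = O$ is finite, contradicting the hypothesis that $Y$ is infinite. Hence $(Y,S)$ has no periodic point, and in particular no periodic orbit, which is the claim.

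I do not anticipate a real obstacle here; the only point that warrants a word of care is making sure the notion of minimality as stated (``no proper, closed $T$-invariant subsets'') does rule out the finite orbit $O$ — which it does, since $O$ is nonempty and, being a proper subset of the infinite set $Y$, would violate minimality unless $O = Y$. One should also be slightly careful to note that the ambient space is metric (being a subset of $\RR^d$) so that finite sets are automatically closed; this is what lets us apply minimality to $O$ directly without any further hypothesis on $S$ beyond continuity.
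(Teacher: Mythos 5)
Your proof is correct and follows essentially the same argument as the paper: take the finite orbit of a hypothetical periodic point, observe it is a nonempty, closed (finite in a metric space), $S$-invariant proper subset of the infinite space $Y$, and contradict minimality. No issues.
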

\begin{proof}
Suppose for the sake of contradiction that $y\in Y$ is periodic. Then $S^k(y)=y$ for some $k\ge1$, so the orbit
\[
\mathcal{O}(y)=\{y,S(y),\dots,S^{k-1}(y)\}
\]
is finite. Since $Y$ is a subspace of the metric space $\mathbb{R}^d$, $\overline{\mathcal{O}(y)}=\mathcal{O}(y)$. Moreover $\mathcal{O}(y)$ is $S$-invariant and nonempty. Because $Y$ is infinite, $\mathcal{O}(y)\neq Y$. Thus $\mathcal{O}(y)$ is a nonempty proper closed $S$-invariant subset of $Y$, contradicting minimality. Therefore no periodic point exists.
\end{proof}

We now prove \cref{thmImpossibleRed}.

\begin{proof}[Proof of \cref{thmImpossibleRed}]
Take $(X, T)$ to be any BSS-computable dynamical system with at least one periodic orbit, and take $(Y, S)$ to be any BSS-computable minimal system, such as a rigid circle rotation by an irrational value.
Proceed by contradiction and suppose there exists a natural telic reduction between the two systems.

Then there exists a function family $\{\varphi_n\}_{n=1}^\infty$ with $\varphi_n:X\rightarrow Y$ satisfying the property that for any $x \in X$ and $J\subseteq X$, $T^n(x) \in J$ implies $S^n(\varphi_n(x)) \in \varphi_n(J)$.
But by \cref{lemConjEquiv} this implies each $\varphi_n$ satisfies $\varphi_n\circ T^n = S^n \circ \varphi_n$.
Then, \cref{lemSemiConjPer} tells us $\varphi_n(\Per(X, T)) \subseteq \Per(Y,S)$. But $\text{card}(\Per(X,T)) >0$, while $\text{card}(\Per(Y,S)) = 0$ since $\Per(Y,S)$ is empty by \cref{lemInfiniteMinimal}. Hence $\varphi_n(\Per(X,T)) \subseteq \Per(Y,S)$ is an absurdity and we obtain our contradiction.
\end{proof}

\bibliographystyle{abbrv}
\bibliography{references}

\end{document}